\newtheorem{theorem}{Theorem}
\newtheorem{lemma}[theorem]{Lemma}
\newtheorem{corollary}[theorem]{Corollary}
\newtheorem{definition}[theorem]{Definition}
\newtheorem{observation}[theorem]{Observation}
\newcommand{\height}[1]{height(#1)}
\def\exitNode{\@ifnextchar[{\@with}{\@without}}
\def\@with[#1]#2{\ensuremath{exitNode(#2,#1)}}
\def\@without#1{\ensuremath{exitNode(#1)}}
\newcommand{\localDH}{LOCAL-DOWNHILL\xspace}
\newcommand{\algname}{Forward-If-Empty\xspace}
\newcommand{\FTE}{FIE\xspace}
\newcommand{\localFTE}{LOCAL-\FTE}
\newcommand{\greedy}{GREEDY\xspace}
\title{Buffer Size for Routing\\ Limited-Rate Adversarial Traffic}
\author{Avery Miller \and Boaz Patt-Shamir}
\date{}
\begin{document}
	
\maketitle

	\vspace{-12mm}
	
	\begin{center}
		\small
		\begin{tabular}{cc}
%			Avery Miller & Boaz Patt-Shamir
%			\\
%			% {\tt guy@eng.tau.ac.il} &{\tt boaz@tau.ac.il}
%			% \\
%			
			\multicolumn{2}{c}{School of Electrical Engineering}\\
			\multicolumn{2}{c}{Tel Aviv University}\\ 
			\multicolumn{2}{c}{Tel Aviv 6997801}\\ 
			\multicolumn{2}{c}{Israel}
		\end{tabular}
	\end{center}
	\vspace*{0.5cm}

	\begin{abstract}
		We consider the slight variation of the adversarial queuing theory
		model, in which an adversary injects packets with
		routes into the network subject to the following %feasibility
		constraint: For any link $e$, the total number of packets injected
		in any time window $[t,t')$ and whose 
		route contains  $e$, 
		is at most $\rho(t'-t)+\sigma$, where
		$\rho$ and $\sigma$ are non-negative parameters. Informally,
		$\rho$ bounds the long-term rate of injections and $\sigma$ bounds the
		``burstiness'' of injection: $\sigma=0$ means that the injection is
		as smooth as it can be.
		
		It is known that greedy scheduling of the packets (under which a link is not
		idle if there is any packet ready to be sent over it) may result in
		$\Omega(n)$ buffer size even on an $n$-line network and very smooth
		injections ($\sigma=0$). In this paper we propose a simple
		non-greedy scheduling policy and show that, in a tree where all
		packets are destined at the root, no buffer needs to be larger than
		$\sigma+2\rho$ to ensure that no overflows occur, which is  optimal in our 
		model.
		%		{\color{blue}We provide a matching lower bound to show that 
		%this is optimal}. 
		The rule of our algorithm is to \emph{forward a packet
			only if its next buffer is completely empty}. The policy is
		centralized: in a single step, a long 
		``train'' of packets may progress together. We show that in some
		sense central coordination is required, by presenting an injection
		pattern with $\sigma=0$ for the $n$-node line that results in
		$\Omega(n)$ packets in a buffer if local control is used, even for
		the more sophisticated ``downhill'' 
		algorithm, which forwards a packet only if its next buffer is less
		occupied than its current one.
	\end{abstract}
	
	\pagenumbering{arabic}

	\vspace*{0.5cm}
\section{Introduction}

We study the process of packet routing over networks, 
where an adversary injects
packets at nodes, and the routing algorithm is required to forward the
packets
along network links until they reach their prescribed destination,
subject to link capacity constraints. Our guiding question is the
following: assuming  that the injection pattern adheres to some given
upper bound specification, what is the smallest
buffer size that will allow a routing algorithm
to deliver all traffic, i.e., that will ensure
that there is no overflow at the node buffers?
%{\color{red}(Perhaps a sentence setting up the importance of having a 
%buffer requirement that doesn't depend on n? Increasing memory 
%requirements whenever more nodes are added to the network doesn't 
%really make sense, but makes a lot of sense if there is an increase in 
%traffic demands, e.g., increased injection rate or burstiness.)}

To bound the injection rate, we follow the classical
$(\sigma,\rho)$ burstiness 
model of Cruz \cite{Cruz,network-calculus-book}. Specifically, we assume that the number of
packets that are injected in any time interval of $t$ time units is at most
$\sigma+\rho\cdot t$ for some non-negative parameters $\sigma$ and
$\rho$. Intuitively, $\rho$ represents the maximal long-term injection rate and
$\sigma$ represents the maximal ``burstiness'' of the injection pattern.
We assume that the injection pattern is feasible, in the sense
that the total average rate of traffic that needs to cross a link does not
exceed its capacity. 
% there is a way to route an infinite number of packets using only
% bounded buffers (whose size may be a function of the network
% parameters, but not of the pattern size).
% In this work we consider
% only the simple case where the underlying topology is a tree, and we
% assume that the number of packets injected any time window of
% length $t$ that need to use any link $e$ is bounded by $\sigma+t\cdot
% c(e)$, where $c$ is the \emph{capacity} of $e$, namely the number of
% packets that $e$ can deliver in a time unit.
% Before we proceed to describe our results, let us give a short account
% of known results and approaches to this problem. 

Our approach is very close to that of the \emph{adversarial  queuing
	theory}~\cite{aqt,aqt2} henceforth abbreviated AQT. 
In the AQT model, packets are injected 
along with their routes by an adversary.  The adversary is limited by the
feasibility constraint, which is formalized as follows. Assuming that
all links have capacity $1$, i.e., a link can deliver at most one packet at
a time step, the
requirement is that 
in any time window of length $w$, the number of injected
packets that need to use any link does not exceed $\lceil
w\rho\rceil$, where  $w$ 
and $\rho$ are model parameters.%
\footnote{
	This model is almost equivalent to Cruz's $(\sigma,\rho)$ model
	(see discussion in  \cite{aqt}).
	We chose to use the $(\sigma,\rho)$ model as it allows for
	simpler expressions to bound the buffer size.
} The main question in AQT is
when a given routing policy is \emph{stable}, i.e., what is the maximum rate that allows the
queues to be bounded under the given policy. 
Furthermore, AQT concerns itself with
\emph{local}, \emph{greedy} policies. Local policies are defined by a
rule that can be applied by each node based only on it local
information (packets residing in that node and possibly its immediate
neighbors). Greedy
(a.k.a.\ 
\emph{work conserving}) policies are policies under which a link is
not fully utilized only when there are not enough packets ready to be transmitted over that
link. These restrictions are justified by the results that say that there
are local greedy policies that are stable for any feasible injection rate
\cite{aqt2}. While stability means that the required buffer size can
be bounded, 
the bounds are usually large (polynomial in the network size).
It should be noted that even linear buffer size is not practical in most cases. Furthermore, it is known 
	that
	in the Internet, big buffers have
	negative effect on traffic (cf. ``bufferbloat'' \cite{bufferbloat}).

In this work, in contrast to AQT, on one hand we are interested in the
quantitative question of buffer size, and on the other hand we do not
restrict ourselves 
to local greedy policies. 
While the interest in buffer size is obvious,  we offer the following
justification to our liberalism
regarding the nature of policies we consider.
First, we claim that  
with the advent of
software-defined networks~\cite{sdn}, central control over the routing algorithm
has become a reality in many networks and should not be disqualified
as a show-stopper anymore. Moreover, our results give a strong indication
that insisting on strict locality may result in a significant blowup in
buffer size. 
Our second relaxation, namely not insisting on greedy
policies, is not new, as 
it is already known that greedy policies
may require large buffers. Specifically,  in \cite{RS}, Ros\'{en} and
Scalosub show that the buffer size required to ensure no
losses in an $n$-node line network with a single
destination is $\Theta(n\rho^2)$, where $\rho$ is the injection rate.
Note that this result  means that for
greedy routing, 
sublinear buffers can guarantee loss-free routing only if the injection
rate is $o(1)$.

We present two {sets of} results. Our main result is positive: we 
propose a centralized
routing algorithm that requires buffers whose size is independent of the network size. 
We prove that in the case of tree networks, when packets are destined
to the tree root,  if the injection pattern is
feasible with injection rate $\rho$ and maximal burstiness $\sigma$, then the required buffer
space need not exceed $\sigma + 2\rho$ in order not to lose any packet. 
{We provide a matching lower bound to show that this is the optimal 
	buffer size.} The
routing algorithm, which we believe to be attractive from the
practical point of view, says simply ``forward a packet to the next hop only
if its next buffer is \emph{empty}.'' 
%\boaz{(hence its name, FIE, for Forward if empty)}.  
The algorithm is
centralized in that it may simultaneously forward %arbitrarily
long ``trains'' of packets (a train consists of a single packet per
node and an empty buffer in front of
the leading packet). 

Our second {set of results} is negative. We show that central 
coordination
is necessary, 
even for the more refined (and non-greedy) \emph{downhill algorithm}.
In the downhill algorithm, a packet is forwarded over a link whenever the buffer at the other 
endpoint is less full than the buffer in its current location.
We show that even in the $n$-node
line network,
there are feasible injection patterns with $0$
burstiness under which the downhill algorithm results in buffer
buildup of $\Omega(n)$ packets. Interestingly, as we show, there are 
certain situations where the downhill algorithm requires buffers of 
size $\Omega(n)$ while the greedy algorithm only needs buffers of size 
1, and other situations where the downhill algorithm needs buffers of 
size 1 while the greedy algorithm needs buffers of size $\Omega(n)$.

We may note in this context the result of Awerbuch
\emph{et al.}~\cite{ABBS}, where they consider
the case of a
single destination node in dynamic networks.  They show  that
	a certain variant of the downhill algorithm ensures that the number of
	packets in a buffer is bounded by $O(n S_{\max})$, where 
	$S_{\max}$ is
	a bound on the number of packets co-residing in the network under
	an unknown optimal schedule.
%the largest node degree and $\sigma$ is the burst parameter of the
%adversary.%
%\footnote{
%	The result is originally stated as $O(\Delta w)$ using the 
%$(w,\rho)$
%	language of AQT. 
%}
%We note that in their model (unlike AQT or our model), the routes are
%not fixed.}
%  a  model where packets are
% injected by an adversary without given routes, but the adversary also
% has a ``schedule'' for each packet which charts its way in space and
% time to its destination, subject to link capacity constraints. 
% The network is assumed to be dynamic, in the sense that the adversary
% can set  the capacity of each link
%  to be $0$ or $1$  at each time step.
% Under these rather harsh conditions, it is proved in \cite{ABBS} that
% when all packets have the same destination, the maximal buffer size
% needed is $O(n S_{\max})$m where $n$ is the network size and
% $S_{\max}$ is an upper bound on the number the simultaneously in 
% the network under the adversary's schedules.
\subsection{More related work}

In the \emph{buffer management} model \cite{MPL,KLMPSS}, a different
angle is taken.
The idea is to lift all restrictions on
% Specifically, no assumptions
% are made on
the 
injection pattern, implying that packet loss is possible.  The goal is to deliver as many 
packets as possible at their destination.  The buffer management model is usually used to
study routing 
within a single switch, modeled by very simple
topologies (e.g., a single buffer \cite{AMRR}, a star \cite{KM}, or a complete bipartite graph \cite{KR}). The
difficulty in such scenarios may be due to   
packets with different values or to some dependencies between packets
\cite{EHMPRR}.  The tree topology is studied in \cite{KLMP}.
The interested reader is referred to \cite{ovflow-survey} for a
comprehensive survey of the work in this area.

The idea of the downhill algorithm has been used for various objectives (avoiding 
deadlocks \cite{ms-1}, computing maximal flow \cite{gt},
multicommodity flow \cite{al1,al2}, and routing in the context of
dynamically
changing topology \cite{AAGMRS,APV,ABBS,KOR}). With the
exception of \cite{KOR}, the buffer size is usually assumed to be linear in the
number of nodes (or in the length of the longest possible simple path
in the network). In \cite{KOR}, a buffer's height is accounted for by
counters, so that each node needs to hold only a constant number of
packets and $O(\log n)$-bit counters.

\paragraph{Organization of the paper.} The remainder of the paper is structured as follows. In Section \ref{model}, we define the network model. Section \ref{ForwardToEmpty} describes and analyzes our \algname algorithm. Section \ref{sectlocaldownhill} gives lower bounds for local downhill algorithms.

\section{Model}\label{model}

%We first describe the general network model and then specify the case that is considered in this paper.

\paragraph{The system.}
We model the system as a directed graph $G=(V,E)$, where
nodes represent hosts or network switches (routers), and edges
represent communication links. For a link $u\to v$ (denoted by 
$(u,v)$), we say that $u$ is
an \emph{in-neighbor} of $v$, and $v$ is an \emph{out-neighbor} of $u$.
Each edge $e\in E$ has capacity
$c(e)\in\mathbb{N}$. We consider static systems, i.e., $G$ is fixed
throughout the execution.

\paragraph{Input (a.k.a.\ adversary).}
We assume that in every {round} $t$, a set of packets is \emph{injected}
into the system. Each packet $p$ is injected at a node, along with a
complete route that specifies a simple path, denoted by $r(p)$, in $G$ that starts at the node of
injection, called the \emph{source}, and ends at the packet's
\emph{destination}. The set of all packets injected in the time interval $[t,t']$ is denoted by $P(t,t')$. We use $P(t)$ to denote the set of packets injected in {round} $t$. Many packets may be injected in the same node, and
the routes may be arbitrary. However, we consider the following type of
restriction on injection patterns. 
\begin{definition}
	For any $\sigma,\rho\ge 0$,
	an injection pattern is said to adhere to a $(\rho,\sigma)$ bound if,
	in any time interval $[t,t']$ and for any edge $e\in E$, it holds
	that
	$
	\left|\left\{p\in P(t,t')\mid e\in r(p)\right\}\right|\le \rho(t'-t)+\sigma
	$.
\end{definition}

\paragraph{Executions.}
A system \emph{configuration} is a mapping of packets to nodes. An
execution of the system is an infinite sequence of configurations
$C_0,C_1,C_2\ldots$, where $C_i$ for $i>0$ is called the configuration after
round $i$, and $C_0$ is the initial configuration. The evolution of
	the system from $C_i$ to $C_{i+1}$ consists of a sequence of
	\emph{ministeps}: 0 or more \emph{injection ministeps} followed by 0 or more \emph{forwarding ministeps}. In particular, in any round $i$,
	\begin{compactenum}
		\item There is one injection ministep for each packet $p\in P(i)$, and, in this ministep, $p$ is mapped to the first node in
		$r(p)$.
		\item In each forwarding ministep, each packet $p$ currently mapped to a node $v$ such that $e=(v,u)$ is an
		edge in $r(p)$, may be re-mapped to $u$. {Such a re-mapping occurs at the end of round $i$, and, during subsequent ministeps before the end of the round, $p$ is considered ``in transit" and is not mapped to any node.} In this case we say that
		$v$ \emph{forwards} $p$ over $e$ to $u$ in round $i$. If $u$
		is the destination of $p$, then $p$ is said to be \emph{delivered}
		and is removed from subsequent configurations. The number of packets
		forwarded over $e$ in one round may not exceed $c(e)$. The choice of which
		packets to forward is controlled by the \emph{algorithm}.
	\end{compactenum}
	
	We use $v[s]$ to refer to node $v$ at the start of ministep $s$. We view packets mapped to a node as stored by the node's \emph{buffer}, which is
	an array of \emph{slots}. The {\emph{load}} of a buffer in a given
	configuration is the number of packets mapped to it in that
	configuration. We further assume that in a configuration, each packet
	mapped to a node is mapped to a particular slot in that node's buffer.
	%{\color{blue}\sout{,whose index is between  $1$ and the buffer 
	%occupancy}}. 
	The slot's index
	is called the packet's \emph{position}, denoted $pos(p)$
	for a packet $p$. Note that we do not place explicit restrictions on the buffer
	sizes, so overflows never occur. 
	% The bounds that we derive in this paper show that relatively small buffers
		%give explicit buffer size restrictions that suffice to guarantee the same.
	
	\paragraph{Algorithms.}
	The role of the algorithm is to determine which packets are forwarded
	in each round. The algorithm must obey the link capacity
	constraints. We distinguish between two types of algorithms: local and
	centralized. In a \emph{centralized} algorithm, the decisions of the
	algorithm may depend, in each round, on 
	the history of complete system configurations at that time. In a \emph{local}
	algorithm, the decision of which 
	packets should be forwarded from node $v$ may depend only on the
	packets stored in the buffer of $v$ and the packets stored in the
	buffers of $v$'s neighbors. Note that both centralized and local
	algorithms are required to be on-line, i.e., may not make decisions
	based on future injections.

	\paragraph{Target Problem: Information Gathering on Trees.}
	In this paper, we consider networks whose underlying topology is a
	directed tree where all links have the same capacity $c$. Furthermore, we assume that the
	destination of all packets is the root of the tree. We sometimes call
	the root the \emph{sink} of the system. Injections adhere to a $(\rho,\sigma)$ bound. To ensure that the 
		injection pattern is feasible, i.e., that finite buffers suffice to 
		avoid
		overflows, we assume that $\rho \leq c$.
	%, since otherwise it is impossible to give a finite upper bound on 
	%buffer size that ensures that no overflows occur.

\section{The \algname Algorithm}\label{ForwardToEmpty}

In this section, we describe the \algname (\FTE) algorithm and prove that, for any $(\rho,\sigma)$-injection pattern, the load of every buffer is bounded above by $\sigma + 2\rho$ during the execution of \FTE. We then show that this bound is optimal by proving that, for any algorithm, there is an injection pattern such that the load of some buffer reaches $\sigma+2\rho$. 

%{\color{blue} \sout{The \FTE algorithm is a particularly simple 
%example of a downhill algorithm, since it ensures that no packet is 
%ever forwarded from a node with smaller load to a node with bigger 
%load. In our case, a node is only activated in a forwarding ministep 
%$s$ if the load of its out-neighbour is 0, or if its out-neighbour 
%will 
%forward all of its packets in ministep $s$. It is remarkable that such 
%a restrictive condition leads to an algorithm whose required buffer 
%size is optimal for directed trees.}}

\subsection{Algorithm}

First, we specify how \FTE positions packets within each buffer. Each buffer is partitioned into \emph{levels} of $c$ slots each, where level $i \geq 1$ consists of slots $(i-1)c + 1,\ldots,ic$. For a packet $p$, its level is given by $\lceil pos(p) / c \rceil$. {Suppose that $m \geq 1$ packets are mapped to a node, then the algorithm maps the packets to positions $1,\ldots,m$.}

\begin{definition}
	The \emph{height} of a node $v$ at ministep $s$ is denoted by \height{v[s]} and is defined as $\max_p\{\textrm{level of packet $p$ in $v[s]$}\}$. The height of the sink is defined to be $-\infty$.
\end{definition}

Next, we specify how the algorithm behaves during each forwarding ministep. Intuitively, we think of the system as having sections of ``ground" that consist of connected subgraphs of nodes with height at most 1, and ``hills" that consist of connected subgraphs of nodes with height greater than 1. The algorithm works by draining the ground packets ``horizontally" towards the sink, and by breaking off some packets from the boundaries of the hills to fall into empty nodes surrounding the hills. Notice that nothing happens in the interior of each hill, e.g., the peak does not get flattened; in each forwarding ministep, only packets from the boundary of the hill get chipped away.

We now describe the algorithm in detail. Each round consists of $c$ forwarding ministeps. At the start of every forwarding ministep $s$, the algorithm computes a maximal set $AP(s)$ of directed paths called \emph{activation paths} for ministep $s$. (The set $AP(s)$ might be empty.)  All nodes that are contained in paths of $AP(s)$ are considered \emph{activated} for ministep $s$. In what follows, we say that two paths are \emph{node-disjoint} if their intersection is empty or equal to the sink. To construct $AP(s)$, the algorithm greedily adds maximal directed paths to $AP(s)$ and ensures that all paths in $AP(s)$ are node-disjoint. The paths can be one of three types:
\begin{compactenum}
	\item \emph{Downhill-to-Sink:} the first node has height greater than 1, the last node is the sink, and all other nodes in the path (if any) have height 1.
	\item \emph{Downhill-to-Empty:} the first node has height greater than 1, the last node has height 0, and all other nodes in the path (if any) have height 1.
	\item \emph{Flat:} the last node is the sink or has height 0, and all other nodes in the path have height 1.
\end{compactenum}
At the start of ministep $s$, the algorithm first adds Downhill-to-Sink paths to $AP(s)$ until none remains, then adds Downhill-to-Empty paths to $AP(s)$ until none remains, then adds all Flat paths to $AP(s)$. Each time a path is added to $AP(s)$, the nodes in that path (except the sink) are unusable in the remainder of the construction of $AP(s)$. For any two different forwarding ministeps $s$ and $s'$, a node (or even a path) can be used in both $AP(s)$ and $AP(s')$, even if $s$ and $s'$ are in the same round.

In each forwarding ministep, each activated node for that ministep
forwards {1 packet (if it has any).}
%{\color{blue}\sout{, and all packets are forwarded simultaneously}}.
Since the packets are identical, it does not matter which packet is forwarded. However, it is convenient for our analysis to assume that the buffers are LIFO, i.e., the forwarded packets 
are taken from an activated node's highest level and, when received, 
are stored in the receiving node's highest level.  This ensures that 
the level of a packet can only change when it is forwarded.

\subsection{Analysis}

We now prove that, for any $(\rho,\sigma)$-injection pattern, the load of every buffer is bounded above by $\sigma + 2\rho$ during the execution of \FTE. Without loss of generality, we may assume that $\rho = c$, since, if $\rho < c$, we can artificially restrict the edge capacity to $\rho$ when executing the algorithm to get the same result.

%We start with some terminology that will be useful in the analysis. Often, we will only care about the number of packets that are above a certain level, which leads to the following refinement to the definition of load.
%
%\begin{definition}
%	The \emph{$k$-load} of a node $v$ at ministep $s$ is defined to be the number of packets stored in $v$'s buffer at level $k$ or higher at the start of ministep $s$. It is denoted by $load_k(v[s])$.
%\end{definition}
%
%Recall that by the way \FTE positions packets within each buffer, 
%level 1 is the only non-empty level that may contain fewer than $c$ 
%packets. Therefore, for any $2 \leq k \leq \height{v[s]}$, the $k$-load 
%of node $v[s]$ is equal to $c\cdot(\height{v[s]} - k + 1)$.

The analysis of our algorithm depends crucially on what happens to the 
loads of connected subsets of nodes that all meet a certain minimum 
height requirement. Informally, we can think of any configuration of 
the system as a collection of hills and valleys, and, for a given hill, 
``slice" it at some level $h$ and look at what happens to all of the 
packets in that hill above the slice. We call the portion of the hill that has 
packets above this slice a \emph{plateau}.

\begin{definition}
	A \emph{plateau of height $h$} (or \emph{$h$-plateau}) at ministep $s$ is a maximal set of nodes forming a connected subgraph such that each node has a packet at level $h-1$ and at least one of the nodes has a packet at level $h$. A plateau $P$ at ministep $s$ will be denoted by $P[s]$, and $\height{P[s]}$ denotes its height at the start of ministep $s$.
\end{definition}

\begin{observation}\label{intersection}
	If $P[s],Q[s]$ are plateaus at ministep $s$, then they are disjoint or one is a subset of the other. If $\height{P[s]} = \height{Q[s]}$ and $P[s] \neq Q[s]$, then $P[s] \cap Q[s] = \emptyset$.
\end{observation}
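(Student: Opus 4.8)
The plan is to recast plateaus as connected components of a height level set; once that is done, both assertions become a one-line component argument, and the only step needing care is the recasting. First I would use the placement rule of \FTE, which packs the $m$ packets of a node into positions $1,\dots,m$: consequently a node has a packet at level $i$ exactly when $\height{v[s]}\ge i$ (and the sink, with height $-\infty$, has a packet at no level). So the two defining conditions of an $h$-plateau $S$ say precisely that $S\subseteq\{v:\height{v[s]}\ge h-1\}$ and that $S$ contains a node of height $\ge h$. I then observe that the maximality clause promotes such an $S$ to a full connected component of the subgraph induced by the set $\{v:\height{v[s]}\ge h-1\}$: adding further nodes of that set keeps $S$ connected and keeps the witness at level $h$, so a plateau cannot stop short of its whole component. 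Hence the $h$-plateaus at ministep $s$ are exactly the connected components of the subgraph induced by $\{v:\height{v[s]}\ge h-1\}$ that contain at least one node of height $\ge h$. This reformulation is uniform in $h$ and uses nothing about $G$ being a tree.

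Next I would establish the dichotomy. Let $P[s]$ be an $h$-plateau and $Q[s]$ an $h'$-plateau, and assume without loss of generality $h\le h'$. If they are disjoint we are done, so suppose some node lies in $P[s]\cap Q[s]$. Every node of $Q[s]$ has height $\ge h'-1\ge h-1$, so $Q[s]$ is a connected subset of $\{v:\height{v[s]}\ge h-1\}$ that meets the connected component $P[s]$ of the subgraph induced by that set; hence $Q[s]\subseteq P[s]$. This proves that any two plateaus are either disjoint or nested. For the equal-height statement, apply this with $h=h'$: a nonempty intersection would give both $Q[s]\subseteq P[s]$ and, by symmetry, $P[s]\subseteq Q[s]$, so $P[s]=Q[s]$, contradicting $P[s]\ne Q[s]$; therefore $P[s]\cap Q[s]=\emptyset$.

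I do not expect a real obstacle: the entire content sits in the first step, specifically the small argument that the word ``maximal'' in the definition of a plateau forces a genuine connected component of the level set $\{v:\height{v[s]}\ge h-1\}$ (this is where the contiguous packing of buffers is used), together with keeping track that the sink, having height $-\infty$, never belongs to any plateau. Everything downstream is the standard fact that a connected subset of a graph meeting a connected component is contained in it.
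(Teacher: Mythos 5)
Your proof is correct and supplies the natural argument behind what the paper states without proof: contiguous packing of positions lets you rephrase the plateau conditions as ``subset of $\{v:\height{v[s]}\ge h-1\}$ containing a node of height $\ge h$,'' maximality then forces an $h$-plateau to be a full connected component of that level set, and the dichotomy and equal-height disjointness follow from the standard fact that a connected subgraph meeting a connected component lies inside it. The one place worth a sanity check is the direction of nesting (you correctly get $Q[s]\subseteq P[s]$ when $h\le h'$, since the higher level set is contained in the lower one) and the implicit restriction to $h\ge 2$, which is the only regime the paper ever uses.
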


Plateaus are defined so that every 
pair of disjoint plateaus is separated by a sufficiently deep
``valley." This ensures that a packet forwarded from one plateau does 
	not immediately arrive 
	at another plateau, allowing us to argue about the number of packets in each plateau 
independently. We  measure the ``fullness" of a plateau by the number of 
packets it contains at a given level and higher.

%\begin{definition}
%	A plateau $P'[s]$ is a \emph{subplateau} of a plateau $P[s]$ if the nodes of $P'$ form a proper subset of $P$ and $\height{P'[s]} > \height{P[s]}$.
%\end{definition}

\begin{definition}\label{kload}	
	The \emph{$k$-load} of a plateau $P[s]$ is defined to be the number of packets in $P[s]$ at level $k$ or higher. For any plateau $P[s]$, we denote by $load_k(P[s],s')$ the $k$-load of $P[s]$ at time $s'$.
\end{definition}

In Definition \ref{kload}, note that $P[s]$ might not be a plateau at the start of a ministep $s'$, but it still represents a well-defined set of nodes. Note that we may speak of the $k$-load of an $h$-plateau for any $k$ and $h$.

We now proceed to analyze the dynamics of the \FTE algorithm. The first thing to note about the choice of activated nodes in each forwarding ministep is that, whenever a node receives packets at the end of a round, the packets are always placed at level 1 in the node's buffer. In particular, each node receives at most $c$ packets, and, either the receiving node's buffer was empty at the beginning of the round, or, it only had packets at level 1 and forwarded as many packets as it received. This observation leads to the following result, which we state without proof for further reference.

\begin{lemma}\label{arrivesatone}
	Let $w$ be a non-sink node. Packets forwarded to $w$ in round $i$ are at level 1 in node $w$'s buffer at the end of round $i$.
\end{lemma}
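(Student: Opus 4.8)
The plan is to follow the observation made just before the lemma, but to pin down carefully \emph{which} node can receive a packet in a given ministep and \emph{when} an arriving packet is actually placed into a buffer slot. The first ingredient is structural: a non-sink node $w$ can be handed a packet during forwarding ministep $s$ only if $w$ lies on one of the (node-disjoint) activation paths in $AP(s)$, and in fact only if $w$ is not the \emph{first} node of that path, since forwarding moves each packet along its path toward the terminal node, so only interior and terminal nodes receive. Inspecting the three admissible path shapes, every non-first, non-sink node of an activation path has height at most $1$: interior nodes have height exactly $1$; the terminal node of a Downhill-to-Empty or a Flat path has height $0$; and the terminal node of a Downhill-to-Sink path is the sink. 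Hence, if $w$ receives a packet in ministep $s$ then $\height{w[s]}\le 1$. Moreover, if $\height{w[s]}=1$ then $w$ is an interior node of its path, so $w$ is activated with a nonempty buffer and therefore also \emph{forwards} one packet in ministep $s$: each height-$1$ arrival is ``paid for'' by a simultaneous departure.

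Next I would exploit the timing convention of the model: within a round, injection ministeps precede all forwarding ministeps, and a forwarded packet is re-mapped to its new node only at the \emph{end} of the round. Consequently, during the forwarding ministeps of round $i$ the set of packets mapped to $w$ only shrinks (it loses one packet whenever $w$ forwards, and gains nothing until the round ends), so $\height{w[\cdot]}$ is non-increasing across those ministeps. Let $s^\ast$ be the first forwarding ministep of round $i$ in which $w$ receives a packet (if there is none, the lemma is vacuous). By the structural step, $\height{w[s^\ast]}\le 1$, so at the start of $s^\ast$ the node $w$ holds at most $c$ packets, all at level $1$; by monotonicity the same bound on the number of mapped packets holds at the start of every subsequent ministep of round $i$ and just before the round-end arrivals are placed.

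Finally I would account for the arrivals themselves. Write $m^\ast\le c$ for the load of $w$ at the start of $s^\ast$, and let $f$ and $r$ be the numbers of packets $w$ forwards and receives over ministeps $s^\ast,\dots,c$ (all of $w$'s round-$i$ arrivals fall in this window, by the choice of $s^\ast$), so $w$ ends the round with $m^\ast-f+r$ packets. Split $r=r_1+r_0$ according to whether $w$ had height $1$ or $0$ at the receiving ministep. Each height-$1$ arrival is matched by a forward in the same ministep, so $f\ge r_1$; and if $w$ ever empties during the window, then from that point on $w$ can only be a terminal node (it has an empty buffer) and forwards nothing, while each ministep contributes at most one arrival. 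A short case split on whether $w$ empties then yields $m^\ast-f+r\le c$ in both cases, so at the end of round $i$ the node $w$ holds at most $c$ packets, which therefore all occupy level $1$; since the LIFO discipline places each round-$i$ arrival at $w$'s current highest level, those arrivals are at level $1$, as claimed.

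The step I expect to be the real obstacle is this arrival accounting: the clean dichotomy hinted at in the preamble (``$w$ was empty at the start of the round, or held only level-$1$ packets at the start of the round'') is not literally forced — $w$ may begin round $i$ with height $2$, drain its top level in the early ministeps, and only then start receiving — so one has to track the first receiving ministep $s^\ast$, use the matched-forward property of height-$1$ arrivals, and use that an emptied buffer can only be a terminal node thereafter, in order to guarantee that arrivals never push $w$ above level $1$ by the time (round's end) they are actually stored.
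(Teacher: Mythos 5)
Your proof is correct, and it is more careful than what the paper offers. The paper states this lemma without proof; it instead precedes it with an informal observation claiming a dichotomy, that a receiving node $w$ ``either was empty at the beginning of the round, or it only had packets at level $1$ and forwarded as many packets as it received.'' You correctly flag that this dichotomy is not literally true: a node can begin a round at height $2$, shed its top-level packets in the early ministeps of the round (e.g.\ as the head of a Downhill path), and only then start appearing as an interior or terminal node of activation paths and receiving packets. Your proof repairs this by anchoring the argument at $s^\ast$, the \emph{first} ministep in which $w$ receives, rather than at the start of the round: the structural analysis of activation paths still gives $\height{w[s^\ast]}\le 1$, and from there the ``matched-forward at height $1$'' observation plus the ``once empty, always terminal and non-forwarding'' observation close the gap. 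One small point you use but do not state: a non-sink node $w$ receives at most one packet per ministep, because activation paths are node-disjoint away from the sink, so at most one in-neighbour of $w$ can lie on a path whose next node is $w$; this is what licences the bound $r\le c$ in the case where $w$ empties. With that noted, the case split you sketch is airtight: if $w$ never empties in the window then $r_0=0$ and $f\ge r_1=r$, so $m^\ast-f+r\le m^\ast\le c$; if $w$ does empty then $f$ equals $m^\ast$ and the final load is exactly $r\le c$. Either way $w$ ends the round with at most $c$ packets, so the LIFO placement puts every round-$i$ arrival at level $1$.
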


Lemma \ref{arrivesatone} highlights a crucial property of our algorithm. The remainder of our analysis is dedicated to bounding the total number of packets at level 2 or higher, and Lemma \ref{arrivesatone} guarantees that any packet that is forwarded will immediately drop to level 1. Further, the algorithm ensures that the level of a packet will never increase. Informally, this means that once the algorithm chooses to forward some packet $p$, we know that $p$ will never again contribute to the 2-load of the system.
%\boaz{proof can be omitted}
%\begin{proof}
%	By the choice of activated nodes, exactly one activated path contains $w$. Therefore, $w$ is activated in ministep $s$. Since only the first node in every activated path has height greater than 1, it follows that $w$ has height at most 1 at the start of ministep $s$. This means that $w$ has at most $c$ packets and, if it has any packets, forwards them all in ministep $s$. Further, since exactly one activated path contains $w$, node $v$ is the only node that forwards packets to $w$. This means that at most $c$ packets arrive at $w$ in ministep $s$. It follows that all of these packets are stored at level 1 in $w$'s buffer.
%\end{proof}

We now set out to prove that the load of each 
2-plateau decreases by $c = \rho$ packets per round due to 
	forwarding, which means 
that any increase in the load of a plateau is due to the burstiness of injections.

 Since the network is a directed tree, we know that, from every node, there exists a unique directed path to the sink. So, we can uniquely identify how packets will leave each plateau, which we now formalize.

\begin{definition}
	Consider any $h$-plateau $P[s]$ at the start of a forwarding ministep 
	$s$. The node $v$ in $P[s]$ whose outgoing edge leads to 
	a node whose height is at most $h-2$ at the start of ministep $s$ is 
	called the \emph{exit node} of $P[s]$ at ministep $s$ and is denoted by 
	$\exitNode{P[s],s}$. We define the 
	\emph{landing node} of $P[s]$ at ministep $s$ to be the node at the 
	head of the outgoing edge of $\exitNode{P[s],s}$.
\end{definition}

In Figure \ref{plateaus}, we illustrate the definitions and concepts introduced so far by giving an example of how the system evolves in a forwarding step {when $c=1$.}

\begin{figure}[!ht]
	\begin{center}
		\includegraphics[scale=0.4]{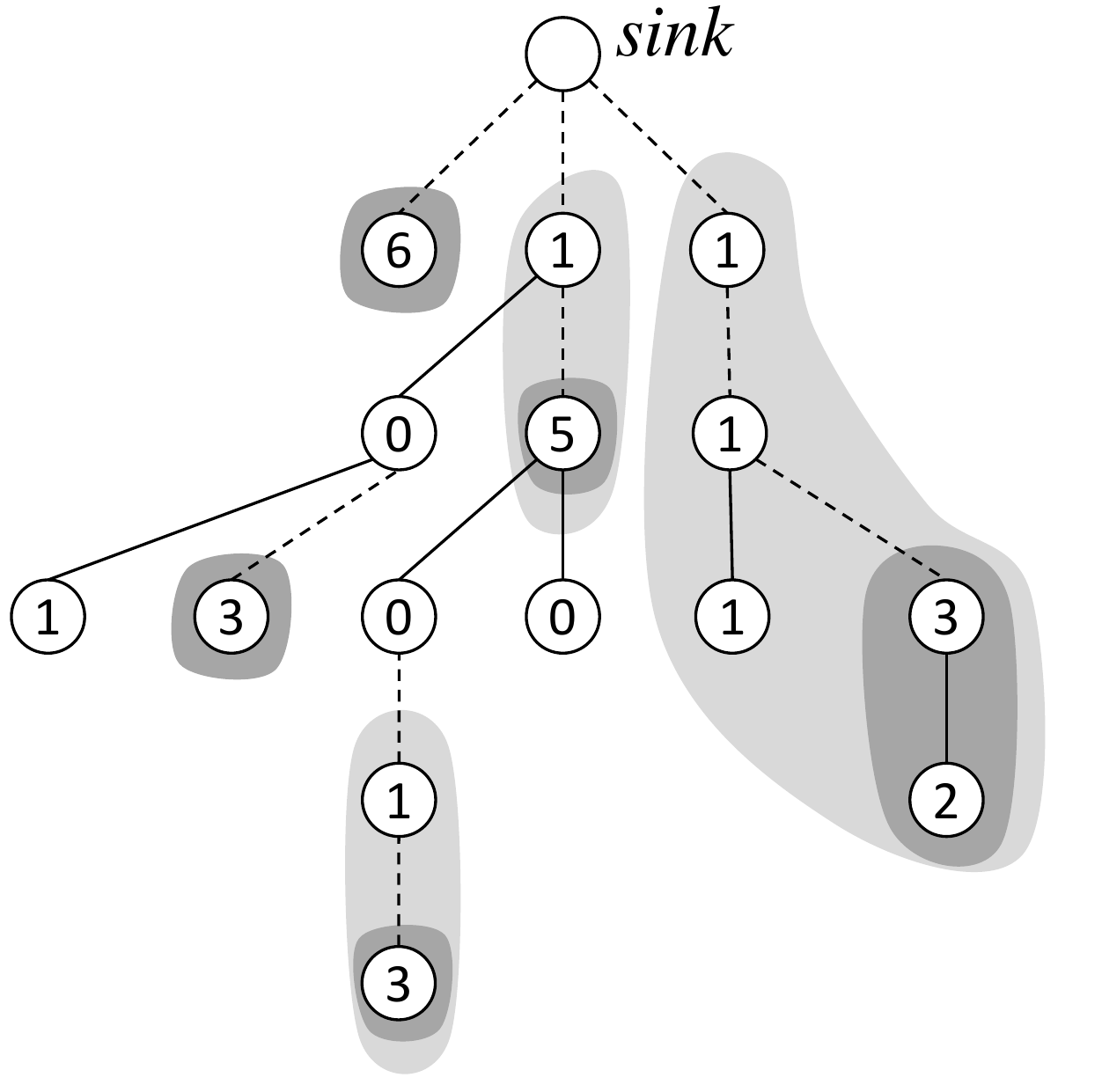}
		\hspace{2cm}
		\includegraphics[scale=0.4]{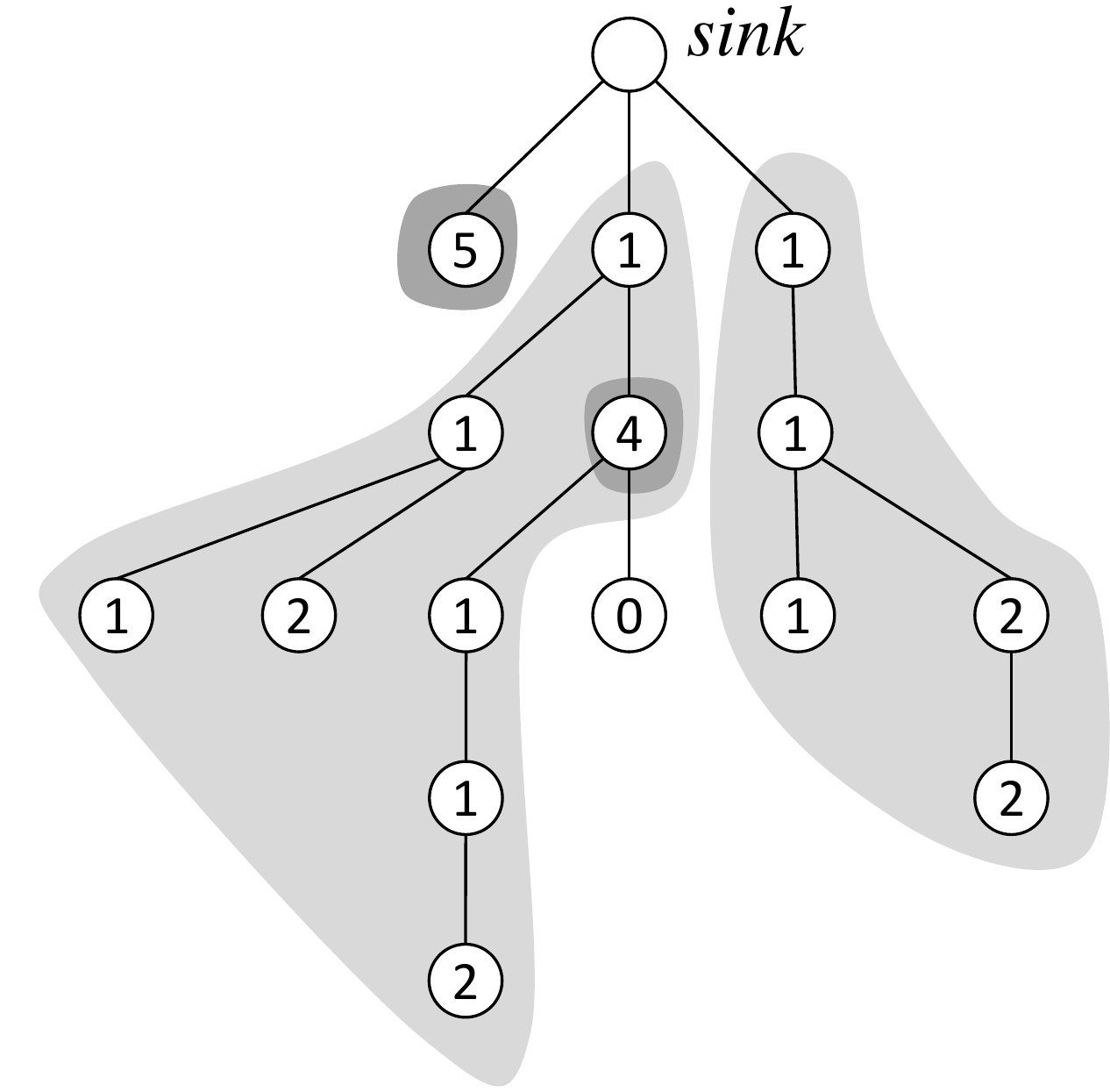}
	\end{center}
	\caption{On the left, the state of a network at the start of some forwarding ministep. Each node is labeled with its load. Dotted lines show the activation paths chosen by the algorithm. On the right, the network immediately after this ministep. Each light grey region is a 2-plateau, and each dark grey region is a subplateau with height greater than 2. In each plateau, the node closest to the sink is the exit node and its parent is the landing node.}
	\label{plateaus}
\end{figure}

We want to show that, for each forwarding ministep $s$, the 2-load of each 2-plateau shrinks by at least $1$. The following result shows that this is the case if the exit node of such a plateau is activated during $s$. 
%This is because exit node $z$ is part of some activation path whose first node $v$ is in the same plateau as $z$, and, by Lemma \ref{arrivesatone}, $v$ forwards $c$ packets from a level greater than 1 to level 1.

\begin{lemma}\label{exitforwards}
	For any forwarding ministep $s$ and any $2$-plateau $P[s]$, we have $load_{2}(P[s],s+1) \leq load_{2}(P[s],s)$. If $\exitNode{P[s],s}$ is activated in $s$, then $load_{2}(P[s],s+1) \leq load_{2}(P[s],s) - 1$.
\end{lemma}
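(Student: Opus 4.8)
The plan is to treat $P:=P[s]$ as a fixed set of nodes and to account for every packet that can enter or leave the set of level-$\ge 2$ packets residing in $P$ between the start of ministep $s$ and the start of ministep $s+1$. Two facts do most of the work: by Lemma~\ref{arrivesatone} together with the LIFO rule (a forwarded packet is always taken from the top of its buffer), the level of a packet never increases, and any packet forwarded during $s$ is --- at the start of $s+1$ --- either in transit or at level~$1$; and every packet that arrives at a node arrives at level~$1$. Hence no packet can newly appear at level~$\ge 2$ inside $P$, so $load_2(P[s],s+1)$ counts precisely those level-$\ge 2$ packets of $P$ that were not forwarded during $s$; therefore $load_2(P[s],s+1)\le load_2(P[s],s)-F$, where $F$ is the number of level-$\ge 2$ packets of $P$ forwarded during $s$. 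This already gives the first inequality, and it reduces the second to showing $F\ge 1$ whenever $\exitNode{P[s],s}$ is activated.

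Next I would record the geometry. Since the tree is oriented towards the sink, the $2$-plateau $P$ --- a connected component of the set of non-empty nodes that contains at least one node of height $\ge 2$ --- is a subtree whose node nearest the sink is the exit node $e^*:=\exitNode{P[s],s}$ (the unique node of $P$ whose out-neighbour is empty or is the sink). Thus every node of $P$ other than $e^*$ has its out-neighbour inside $P$, while $e^*$'s out-neighbour (its landing node $w$) lies outside $P$. Also, a node of height $>1$ can be activated only as the first node of a Downhill path (of either kind), and when so activated it forwards, by LIFO, a level-$\ge 2$ packet to its out-neighbour.

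Now suppose $e^*$ is activated in $s$. If $\height{e^*}\ge 2$ then $e^*$ is the first node of a Downhill path and forwards a level-$\ge 2$ packet to $w\notin P$, so $F\ge 1$. The substantial case is $\height{e^*}=1$; this forces $P\neq\{e^*\}$, so $P$ contains a node of height $\ge 2$ distinct from $e^*$. Pick a \emph{lowest} such node: a height-$\ge 2$ node $f\in P$ with no height-$\ge 2$ node strictly between $f$ and $e^*$ on the directed path $f\to f_1\to\cdots\to f_{k-1}\to e^*$ lying inside $P$. Then $f_1,\dots,f_{k-1},e^*$ all have height exactly $1$, so $\gamma:=f\to f_1\to\cdots\to f_{k-1}\to e^*\to w$ is a maximal Downhill path. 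I would then argue that $e^*$ cannot be activated via a Flat path: otherwise $e^*$ lies on no Downhill path of $AP(s)$, yet by maximality of the Downhill phase of the construction of $AP(s)$ some non-sink node of $\gamma$ is blocked by an added Downhill path $\pi^*$; that node is not $w$ (if $w$ is the sink, sharing it blocks nothing; if $w$ is empty, a Downhill path through $w$ has $w$ as its last node, but then the Flat path carrying $e^*$, which necessarily ends at $w$, could not have been added), so it is some $f_j$ with $0\le j\le k-1$ (writing $f_0:=f$); but $f_j$ is neither empty nor the sink, so it is not the last node of $\pi^*$, and following the unique out-edges the path $\pi^*$ continues through $f_{j+1},\dots,f_{k-1},e^*$, forcing $e^*\in\pi^*$ --- contradicting node-disjointness, since $e^*$ is on a Flat path. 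Hence $e^*$ is activated via a Downhill path $\pi$; as $\height{e^*}=1$, $e^*$ is not the first node of $\pi$, and since $e^*$'s out-neighbour $w$ lies outside $P$ (height $\le 0$), $e^*$ is the last-but-one node of $\pi$. Its first node $g$ has height $>1$ and, being joined to $e^*$ through non-empty nodes, lies in $P\setminus\{e^*\}$, so $g$ forwards a level-$\ge 2$ packet to its out-neighbour, which is again a node of $P$. In every case a level-$\ge 2$ packet of $P$ is forwarded during $s$, so $F\ge 1$, completing the proof.

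I expect the main obstacle to be precisely this last step: ruling out that $e^*$ is activated ``uselessly'' via a Flat path that only shuffles level-$1$ packets. This is where the ordering of the path types (Downhill-to-Sink, then Downhill-to-Empty, then Flat) and the node-disjointness of activation paths are essential, and where one must treat carefully the two ways $\gamma$ can be blocked (at one of its internal height-$1$ nodes, versus at an empty landing node) and the degenerate shapes of $P$ ($P=\{e^*\}$, and $k=1$ with no node between $f$ and $e^*$). A secondary point that needs attention is the in-round timing convention: a forwarded packet is in transit, hence absent from every buffer, until the round ends, and only at a round boundary does it reappear at level~$1$; either way it has left the level-$\ge 2$ packets of $P$ by the start of $s+1$, so the accounting above is unaffected.
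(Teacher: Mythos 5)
Your proof is correct and follows the same overall strategy as the paper's: reduce the first inequality to the observation that no packet can newly appear at level $\geq 2$ inside $P[s]$ (by Lemma~\ref{arrivesatone} and the LIFO convention), and obtain the second by exhibiting a height-$\geq 2$ node of $P[s]$ that forwards a level-$\geq 2$ packet during ministep $s$. The substantive difference is in the case $\height{\exitNode{P[s],s}} = 1$: the paper simply asserts, ``From the specification of the algorithm, there is a path of activated nodes starting from a node $v$ in $P[s]$ with height at least~$2$, ending at $w$,'' without explaining why the exit node cannot be activated merely as part of a Flat path --- in which case only level-$1$ packets would move along it and the $2$-load would not drop, so the assertion is not automatic. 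You correctly flagged this as the crux and supplied the missing argument: if the candidate Downhill path $\gamma$ were blocked only at the landing node $w$, then the Flat path through $e^*$ (which must also terminate at $w$) could not have been added either; and if $\gamma$ were blocked at some internal height-$1$ node $f_j$, then the uniqueness of out-edges in the tree forces the blocking Downhill path to continue through $f_{j+1},\dots,e^*$, contradicting node-disjointness with $e^*$'s Flat path. Hence your proposal is a genuine completion of, rather than an alternative to, the paper's reasoning, and it is more rigorous precisely at the step the paper leaves implicit.
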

\begin{proof}
	First, for any forwarding ministep $s$ that is not the last one in a round, it is easy to see that $load_{2}(P[s],s+1) \leq load_{2}(P[s],s)$ since no forwarded packets arrive at any nodes until the end of the round. If $s$ is the last forwarding ministep $s$ of a round, then Lemma \ref{arrivesatone} guarantees that any packet arriving at a node is stored in level 1, which implies that the number of packets stored at level 2 or higher does not increase.
	
	Next, if $\exitNode{P[s],s}$ is activated in ministep $s$, we claim that there exists 1 packet that is at level at least $2$ in $P[s]$ at the start of $s$ that is forwarded in ministep $s$. If the height of $\exitNode{P[s],s}$ is at least 2, then the packet forwarded from $\exitNode{P[s],s}$ satisfies our claim. Otherwise, let $w$ be the landing node of $P[s]$. By the definition of $\exitNode{P[s],s}$, $w$ is either the sink or a node of height 0. From the specification of the algorithm, there is a path of activated nodes starting from a node $v$ in $P[s]$ with height at least 2, ending at $w$, and all nodes other than $v$ and $w$ have height exactly 1. The packet forwarded from $v$ satisfies our claim. Finally, we note that if $s$ is the final forwarding ministep of a round, then Lemma \ref{arrivesatone} guarantees that any packet arriving at a node is stored in level 1, which implies that the number of packets stored at level 2 or higher does not increase.
\end{proof}

The next challenge is to deal with the fact that plateaus can merge during a round. For example, an exit node of one plateau might forward packets to a node $v$ that was sandwiched between two disjoint plateaus both with larger height than $v$, but the increase in $v$'s height results in one large plateau formed by the merging of $v$ with the two plateaus. Figure \ref{plateaus} shows an example of three plateaus merging into one. We want to compare the load of this newly-formed plateau with the loads of the plateaus that merged. In particular, for any round $i$, we consider its forwarding ministeps $s_1,\ldots,s_1+c-1$, and for any plateau $P[s_1+c]$, we consider its ``pre-image" to be the set of plateaus that existed at the start of ministep $s_1$ that merged together during ministeps $s_1,\ldots,s_1+c-1$ to form it.

\begin{definition}
	For any round $i$, consider the forwarding ministeps $s_1,\ldots,s_1+c-1$. Consider any $h$-plateau $P[s_1+c]$ with $h \geq 2$ that exists at the start of ministep $s_1+c$. The \emph{pre-image of $P[s_1+c]$}, denoted by \emph{Pre($P[s_1+c]$)}, is the set of $h$-plateaus that existed at the start of $s_1$ that are contained in $P[s_1+c]$.
\end{definition}

The following observation follows from the fact that any two distinct plateaus with the same height are disjoint (see Observation \ref{intersection}).

\begin{observation}\label{disjointpre}
	For any plateau $P[s_1+c]$ that exists at the start of ministep $s_1+c$, any two distinct plateaus $P_1[s_1],P_2[s_1] \in Pre(P[s_1+c])$ are disjoint. Further, for any two distinct $h$-plateaus $P_1[s_1+c],P_2[s_1+c]$ that exist at the start of ministep $s_1+c$, we have $Pre(P_1[s_1+c]) \cap Pre(P_2[s_1+c]) = \emptyset$.
\end{observation}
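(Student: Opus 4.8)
The plan is to obtain both claims as immediate corollaries of Observation~\ref{intersection}, which asserts that two distinct plateaus of the same height occupy disjoint node sets. The one preliminary point to nail down is that, by the definition of the pre-image, $P[s_1+c]$ being an $h$-plateau forces \emph{every} member of $Pre(P[s_1+c])$ to be an $h$-plateau as well (at the start of ministep $s_1$); hence all the plateaus named in either part of the statement share the common height $h$, which is exactly what is needed to invoke Observation~\ref{intersection}.

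For the first claim, I would simply take two distinct plateaus $P_1[s_1],P_2[s_1]\in Pre(P[s_1+c])$. By the preliminary remark both are $h$-plateaus at ministep $s_1$, and they are distinct, so Observation~\ref{intersection} gives $P_1[s_1]\cap P_2[s_1]=\emptyset$ directly.

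For the second claim, I would argue by contradiction. Suppose some plateau $Q[s_1]\in Pre(P_1[s_1+c])\cap Pre(P_2[s_1+c])$. By the definition of the pre-image, $Q[s_1]$ is then, as a node set, contained in both $P_1[s_1+c]$ and $P_2[s_1+c]$. Since a plateau is by definition a nonempty connected subgraph --- it must contain at least one node with a packet at level $h$ --- we have $Q[s_1]\neq\emptyset$, and therefore $P_1[s_1+c]\cap P_2[s_1+c]\neq\emptyset$. But $P_1[s_1+c]$ and $P_2[s_1+c]$ are two distinct $h$-plateaus at ministep $s_1+c$, so Observation~\ref{intersection} says they must be disjoint, a contradiction. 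Hence the two pre-images are disjoint.

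I do not expect a genuine obstacle here: the statement is a direct consequence of Observation~\ref{intersection}. The only thing that requires care is definitional bookkeeping --- remembering that plateaus in this argument are being treated purely as node sets (as noted after Definition~\ref{kload}), that ``contained in'' in the pre-image definition means node-set containment, that every pre-image member is an $h$-plateau rather than a plateau of some other height, and that plateaus are nonempty, so that ``$Q[s_1]$ is contained in two disjoint sets'' really does force a contradiction.
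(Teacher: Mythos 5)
Your argument is correct and matches the paper's intent: the paper states this observation without detailed proof, remarking only that it ``follows from the fact that any two distinct plateaus with the same height are disjoint (see Observation~\ref{intersection}),'' and your write-up simply fills in the definitional bookkeeping (all pre-image members are $h$-plateaus; plateaus are nonempty node sets) needed to invoke that observation for both claims.
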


We now prove a technical lemma that guarantees that, for any 2-plateau $P[s_1+c]$, in each forwarding ministep $s \in \{s_1,\ldots,s_1+c-1\}$ there always exists a plateau in $Pre(P[s_1+c])$ that contains an exit node that is activated during ministep $s$. This will imply that, in each ministep during the merge, the 2-load of one of the 2-plateaus involved in the merge decreases by at least 1. 
%We will later use this to show that the sum of the 2-loads of the 2-plateaus involved in the merge decreases by at least $c$.

\begin{lemma}\label{oneactivated}
	For any round $i$, consider the forwarding ministeps $s_1,\ldots,s_1+c-1$. For any 2-plateau $P[s_1+c]$ that exists at the start of ministep $s_1+c$, for each $s \in \{s_1,\ldots,s_1+c-1\}$, there exists a $P[s] \subseteq P[s_1] \in Pre(P[s_1+c])$ such that $\exitNode{P[s],s}$ is activated in ministep $s$.
\end{lemma}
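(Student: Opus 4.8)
The plan is to fix the round $i$ with forwarding ministeps $s_1,\dots,s_1+c-1$, fix a ministep $s$ in this range, write $R$ for the node set of $P[s_1+c]$, and call a $2$-plateau at ministep $s$ \emph{admissible} if it is contained in some member of $\mathrm{Pre}(P[s_1+c])$. The goal is then to show that some admissible $2$-plateau has its exit node activated in $s$. I would do this in three parts: (i) exhibit one admissible $2$-plateau; (ii) assuming no admissible $2$-plateau has an activated exit node, use maximality of $AP(s)$ to produce an admissible $2$-plateau whose exit node \emph{is} activated; (iii) conclude.

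For (i): since $P[s_1+c]$ is a $2$-plateau it contains a node $u$ holding a packet at level $2$ at the start of $s_1+c$, and by Lemma~\ref{arrivesatone} (and the remarks preceding it) a node ending round $i$ with a packet above level $1$ received nothing during round $i$; hence $u$'s load is non-increasing over $s_1,\dots,s_1+c$, so $u$ has height $\ge 2$ throughout the round and lies in a $2$-plateau $Q^u$ at $s_1$. Since all loads are non-increasing over $s_1,\dots,s_1+c-1$, the $2$-plateau of $u$ at $s$ is contained in $Q^u$; and one checks $Q^u\subseteq R$, so $Q^u\in\mathrm{Pre}(P[s_1+c])$ and the $2$-plateau of $u$ at $s$ is admissible.

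For (ii)--(iii): suppose no admissible $2$-plateau has an activated exit node, and pick any admissible $P[s]$ with exit node $v=\exitNode{P[s],s}$ and landing node $w$; by the definition of exit node with $h=2$, $w$ is the sink or is empty. As $P[s]$ is a $2$-plateau it has a node of height $>1$; letting $a$ be the last node of height $>1$ on the path inside $P[s]$ running up to the apex $v$, the directed path $a\to\dots\to v\to w$ is a legal Downhill path (to-Sink if $w$ is the sink, to-Empty otherwise: first node of height $>1$, interior nodes of height exactly $1$, last node the sink or height $0$). Since $AP(s)$ is built by exhausting Downhill paths before adding any Flat path, since every Downhill path that meets $a\to\dots\to v$ necessarily contains $v$ (a tree offers only one route toward the sink), and since $v$ is on no path of $AP(s)$ by hypothesis, the only way $a\to\dots\to v\to w$ can fail to be addable is that $w$ is already claimed — by a Downhill path $\pi$ ending at $w$; in particular $w$ is not the (shareable) sink, so $w$ is empty and $\pi$ enters $w$ from a child $z\ne v$ of $w$. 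The first node $x$ of $\pi$ has height $>1$ and is joined to $z$ by height-$1$ nodes, so $x$ and $z$ share a $2$-plateau $P'[s]$; since $z$'s parent $w$ has height $0$, $z=\exitNode{P'[s],s}$, and $z$ is activated because it lies on $\pi\in AP(s)$. Finally $P'[s]$ is admissible: $w$ receives the packet $z$ forwards into it, so $w$ is loaded at $s_1+c$; as $w$ is adjacent to $v\in R$ and $P[s_1+c]$ is a maximal connected set of loaded nodes, $w\in R$; tracing the loaded path $x\to\dots\to z\to w$ at $s_1+c$ gives $x\in R$, and the $2$-plateau $Q'$ of $x$ at $s_1$ satisfies $Q'\subseteq R$, hence $Q'\in\mathrm{Pre}(P[s_1+c])$ and $P'[s]\subseteq Q'$ is admissible — contradicting the hypothesis. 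So some admissible $2$-plateau has an activated exit node, which is the lemma.

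The step I expect to be the real obstacle is the ``stays inside $R$'' claim used twice ($Q^u\subseteq R$ and $Q'\subseteq R$): that a $2$-plateau of $s_1$ one of whose nodes still holds a level-$2$ packet at $s_1+c$ survives, as a set of nodes, inside $P[s_1+c]$. The clean ingredient is that no node of a $2$-plateau ever drains: a node all of whose tree-neighbours have height $>1$ is never activated and never receives (immediate from the path-type definitions together with Lemma~\ref{arrivesatone}), so its load is frozen; a first node of a Downhill path has height $>1$, hence loses at most $c$ from a load of at least $c+1$ and stays loaded; and the first node of a maximal Flat path lies in no $2$-plateau at all (by maximality its children cannot have height $1$, a height-$\ge 2$ child would have placed it on a Downhill path instead, and height-$0$ children attach it to no hill). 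The care that remains is with interior nodes of activation paths, whose load transiently dips by $1$ during the round until the forwarded packet lands at the round's end: one must follow these across the $c$ ministeps — and across the merges in which plateaus combine — to be sure every node of $Q^u$ and $Q'$ is still loaded and still connected to $u$ in $P[s_1+c]$.
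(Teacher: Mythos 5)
Your proof follows essentially the same strategy as the paper's: find an admissible $2$-plateau, observe that if its exit node $v$ is not activated then its landing node $w$ is the sink or empty, invoke maximality of $AP(s)$ together with the low priority of Flat paths to produce a Downhill path $\pi$ into $w$, and then argue that the $2$-plateau containing $\pi$'s starting node is admissible and has an activated exit node. The paper does this as a direct argument (picking the admissible $2$-plateau in $\mathscr{P}[s]$ whose exit node is closest to the sink, a choice it does not really exploit), whereas you phrase it as a contradiction; this is a presentational difference only. Two small remarks. First, your sentence ``a node ending round $i$ with a packet above level $1$ received nothing during round $i$'' overstates Lemma~\ref{arrivesatone}: a node can receive packets (they land at level $1$) while also holding a level-$\geq 2$ packet; what you actually need — and do use — is that the level of a given packet never increases, so any node holding a level-$\geq 2$ packet at $s_1+c$ held it at every $s_1,\dots,s_1+c-1$. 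Second, the point you flag at the end — that $Q^u\subseteq R$ and $Q'\subseteq R$, i.e.\ that a $2$-plateau of $s_1$ whose nodes survive into $P[s_1+c]$ does so \emph{as a whole set} — is genuinely the delicate step, but you should know that the paper itself also handles it by assertion rather than proof: in the paper's proof this is the unargued sentence ``Let $P[s_1]$ be the plateau in $Pre(P[s_1+c])$ that contains $P[s]$'' and likewise ``$Q$ is contained in some plateau $P'[s_1]$ of $Pre(P[s_1+c])$,'' and the same assertion recurs in the proof of Lemma~\ref{forwardlose}. So you are not missing an argument the paper supplies; you have independently located exactly the place where the published proof is terse. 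The ingredients you sketch for closing it (first nodes of Downhill paths stay loaded because their load exceeds $c$; non-first nodes of activation paths receive a packet that lands at round's end; a height-$1$ node that is the first node of a maximal Flat path must have only height-$0$ in-neighbours and hence cannot belong to a $2$-plateau) are the right ones.
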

\begin{proof}
	First, note that since $P[s_1+c]$ is a 2-plateau, there is a non-empty set $S$ of nodes in $P[s_1+c]$ such that each node in $S$ has height at least 2 at the start of ministep $s_1+c$. Lemma \ref{arrivesatone} implies that each node in $S$ has height at least 2 at the start of each ministep $s_1,\ldots,s_1+c-1$. For each $s \in \{s_1,\ldots,s_1+c-1\}$, let $\mathscr{P}[s]$ be the set of 2-plateaus that contain at least one node from $S$ at the start of ministep $s$.

Consider an arbitrary $s \in \{s_1,\ldots,s_1+c-1\}$. From $\mathscr{P}[s]$, choose a plateau $P[s]$ that minimizes the distance between $\exitNode{P[s],s}$ and the sink. Let $P[s_1]$ be the plateau in $Pre(P[s_1+c])$ that contains $P[s]$. If $\exitNode{P[s],s}$ is activated during ministep $s$, we are done. Otherwise, consider the landing node $\ell$ of $P[s]$. Since $P[s]$ is a 2-plateau, then, by definition, $\ell$ is the sink or $\ell$'s buffer is empty at the start of ministep $s$. The set of activation paths chosen by the algorithm is maximal, so there is some path $Q$ of nodes with $\ell$ as its last node. Further, since Flat paths have lowest priority, we know that $Q$ is either a Downhill-To-Sink path (if $\ell$ is the sink) or a Downhill-To-Empty path (if $\ell$ isn't the sink). Therefore, the first node in $Q$ has height at least 2, the last node in $Q$ is $\ell$, and all other nodes (if any) have height exactly 1. Let $P'[s]$ be the 2-plateau that contains $Q \setminus \{\ell\}$ at the start of ministep $s$, and note that $\exitNode{P'[s],s}$ is activated in ministep $s$ since all nodes in $Q$ are activated. Hence, it is sufficient to show that $P'[s]$ is contained in some plateau of $Pre(P[s_1+c])$. To do so, note that, at the start of ministep $s_1+c$ (i.e., after all packets that were forwarded in round $i$ have been received) all nodes in $Q \cup \{\ell\}$ have height at least 1. This is because the first node of $Q$ had greater than $c$ packets at the start of ministep $s$ and forwards at most one packet per ministep, and each other node in $Q \cup \ell$ had one packet forwarded to it in ministep $s$. Further, all nodes in $P[s] \subseteq P[s_1]$ have height at least 1 at the at the start of ministep $s_1+c$ since, by the definition of $Pre(P[s_1+c])$, we know that $P[s_1] \subseteq P[s_1+c]$ is a 2-plateau. Since $\ell$ is the last node of $Q$ and the landing node of $P[s]$, it follows that the nodes of $Q \cup P[s] \cup \{\ell\}$ are all contained in the same 2-plateau at the start of ministep $s_1+c$. Since $P[s] \subseteq P[s_1] \subseteq P[s_1+c]$, it follows that $Q \subseteq P[s_1+c]$, and so $Q$ is contained in some plateau $P'[s_1]$ of $Pre(P[s_1+c])$. It follows that $P'[s] \subseteq P'[s_1] \in Pre(P[s_1+c])$, as desired.
\end{proof}

The next result shows that, when plateaus merge during a forwarding ministep, at least $c$ packets descend to level 1 in the merging process. 
%This follows by applying Lemma \ref{exitforwards} to each plateau involved in the merge (which is permitted due to Observation \ref{disjointpre}), and using Lemma \ref{oneactivated} to show that the exit node of at least one of these plateaus is activated.

\begin{lemma}\label{forwardlose}
	For any round $i$, consider the forwarding ministeps $s_1,\ldots,s_1+c-1$. For any 2-plateau $P[s_1+c]$ that exists at the start of ministep $s_1+c$, we have $load_{2}(P[s_1+c],s_1+c) \leq load_{2}( \bigcup_{P[s_1] \in Pre(P[s_1+c])} P[s_1],s_1) - c$. 
\end{lemma}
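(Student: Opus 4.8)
The plan is to combine Lemma~\ref{oneactivated} with Lemma~\ref{exitforwards}, tracking the $2$-load of a single \emph{evolving} set of nodes through all $c$ forwarding ministeps of round $i$. For each $s \in \{s_1,\dots,s_1+c-1\}$, Lemma~\ref{oneactivated} hands us a plateau $P^{(s)}[s] \subseteq P_s[s_1] \in Pre(P[s_1+c])$ whose exit node is activated in ministep $s$. So by (the second part of) Lemma~\ref{exitforwards}, $load_2\bigl(P^{(s)}[s],s+1\bigr) \le load_2\bigl(P^{(s)}[s],s\bigr) - 1$, i.e.\ in each of the $c$ ministeps, \emph{some} plateau of the pre-image loses at least one unit of $2$-load. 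The first sentence of Lemma~\ref{exitforwards} gives the companion monotonicity statement: for \emph{any} $2$-plateau $R[s]$ at the start of ministep $s$, $load_2(R[s],s+1) \le load_2(R[s],s)$. First I would assemble these two facts into the single inequality
\[
\sum_{s=s_1}^{s_1+c-1} \Bigl( load_2\bigl(U[s],s\bigr) - load_2\bigl(U[s],s+1\bigr)\Bigr) \ \ge\ c,
\]
where $U[s] := \bigcup_{P[s_1]\in Pre(P[s_1+c])} P[s_1]$ viewed as a fixed set of nodes (it is a union of disjoint plateaus by Observation~\ref{disjointpre}, so its $2$-load is the sum of theirs). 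The point is that in ministep $s$ the activated exit node of $P^{(s)}[s]$ drains one packet from level $\ge 2$ out of $U[s]$, while by monotonicity no other plateau inside $U[s]$ \emph{gains} $2$-load; hence the total $2$-load of the node-set $U$ drops by at least $1$ each ministep.

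Next I would telescope: the left-hand sum equals $load_2(U,s_1) - load_2(U,s_1+c)$, and since loads are nonnegative this gives $load_2(U,s_1+c) \le load_2(U,s_1) - c$. It remains to identify $load_2\bigl(\bigcup_{P[s_1]\in Pre(P[s_1+c])}P[s_1], s_1\bigr)$ on the right (this is exactly how the lemma is stated), and to check that every packet counted in $load_2(P[s_1+c], s_1+c)$ is counted in $load_2(U, s_1+c)$ — i.e.\ that $P[s_1+c] \subseteq U = \bigcup_{P[s_1]\in Pre(P[s_1+c])}P[s_1]$ as node sets. Here I would invoke Lemma~\ref{arrivesatone}: any node with a packet at level $\ge 2$ at the start of ministep $s_1+c$ already had that packet there throughout round $i$ (forwarding only adds packets at level $1$, and levels never increase), so every node of $P[s_1+c]$ witnessing height $\ge 2$ lies in a $2$-plateau of $Pre(P[s_1+c])$; the intermediate height-$1$ nodes of $P[s_1+c]$ need a short argument that they, too, already belonged to some pre-image plateau (or the accounting should be phrased so that only level-$\ge 2$ packets matter, which is all $load_2$ sees). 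Combining $load_2(P[s_1+c],s_1+c) \le load_2(U,s_1+c) \le load_2(U,s_1) - c$ yields the claim.

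The main obstacle I anticipate is the careful bookkeeping in the telescoping step: the plateau $P^{(s)}[s]$ supplied by Lemma~\ref{oneactivated} changes from ministep to ministep, and the plateau structure inside $U$ reshuffles as merges happen, so I must argue about the $2$-load of the \emph{fixed node set} $U$ rather than about any individual plateau, and be sure that the ``$-1$'' extracted in ministep $s$ is not silently cancelled by a level-$1$ packet being pushed up to level $2$ somewhere else in $U$. Lemma~\ref{arrivesatone} (levels never increase, arrivals land at level $1$) is precisely what rules that out, so the argument should go through cleanly once the quantity being tracked is pinned down as $load_2(U,\cdot)$; the remaining subtlety is just confirming that the exit-node packet drained in ministep $s$ genuinely leaves $U$ (it lands at a height-$0$ node or the sink, outside any $2$-plateau, hence outside the level-$\ge2$ count for $U$) rather than merely relocating within it.
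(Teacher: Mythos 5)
Your proposal is correct and follows essentially the same route as the paper: both invoke Lemma~\ref{oneactivated} to exhibit an activated exit node inside the pre-image at each of the $c$ forwarding ministeps, Lemma~\ref{exitforwards} for the per-ministep drop of at least one, and Lemma~\ref{arrivesatone} both to rule out level-$\ge 2$ arrivals and to argue that every node of $P[s_1+c]$ holding a level-$\ge 2$ packet already lay in a pre-image plateau. The only cosmetic difference is that the paper tracks $\sum_{P[s_1]\in Pre(P[s_1+c])} load_2(P[s_1],\cdot)$ and inducts over ministeps, while you telescope $load_2(U,\cdot)$ for the fixed union $U$; since the pre-image plateaus are pairwise disjoint (Observation~\ref{disjointpre}), these quantities coincide.
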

\begin{proof}
	To calculate $load_2(P[s_1+c],s_1+c)$, we can sum up the 2-loads of all nodes in $P[s_1+c]$ at the start of ministep $s_1+c$. By Lemma \ref{arrivesatone}, any packet with level at least 2 in some node $v$ at the start of ministep $s_1+c$ was also at level at least 2 in $v$ at the start of each ministep $s_1,\ldots,s_1+c-1$, which means that $v$ belongs to some $P[s_1] \in Pre(P[s_1+c])$. In particular, each node of $P[s_1+c]$ with height at least 2 is in some $P[s_1] \in Pre(P[s_1+c])$, and, each node in every $P[s_1] \in Pre(P[s_1+c])$ is contained in $P[s_1+c]$ (by definition). It follows that the value of $load_2(P[s_1+c],s_1+c)$ is equal to $\sum_{P[s_1] \in Pre(P[s_1+c])} load_2(P[s_1],s_1+c)$. 
	
	For each $s \in \{s_1,\ldots,s_1+c-1\}$, and for each $P[s_1] \in Pre(P[s+1])$, we can apply Lemma \ref{exitforwards} to conclude that either:
	\begin{compactenum}
		\item $load_2(P[s_1],s+1) \leq load_2(P[s_1],s)$ (if $P[s_1]$ does not contain a 2-plateau whose exit node is activated in ministep $s$), or,
		\item $load_2(P[s_1],s+1) \leq load_2(P[s_1],s) - 1$ (if $P[s_1]$ does contain a 2-plateau whose exit node is activated in ministep $s$).
	\end{compactenum}
	By Lemma \ref{oneactivated}, there exists some $P'[s_1] \in Pre(P[s+1])$ that contains a plateau $P[s]$ such that $\exitNode{P[s],s}$ is activated in ministep $s$. This implies that there is always some $P'[s_1]$ whose 2-load decreases by at least 1, for each ministep $s \in \{s_1,\ldots,s_1+c-1\}$. So, by an induction argument over the possible values of $s$, we can show that 
	\[
	\sum_{P[s_1] \in Pre(P[s_1+c])} load_2(P[s_1],s_i) \leq \sum_{P[s_1] \in Pre(P[s_1+c])} load_2(P[s_1],s_1) - (i-1)
	\]
	Setting $i = c+1$ gives that 
	\[
	load_2(P[s_1+c],s_1+c) = \sum_{P[s_1] \in Pre(P[s_1+c])} load_2(P[s_1],s_1+c) \leq load_2(\bigcup_{P[s_1] \in Pre(P[s_1+c])} P[s_1],s_1) - c
	\]
\end{proof}

We now arrive at the main result, which shows that the buffer loads are always bounded above by $\sigma+2c$.

\begin{theorem}
	Consider the execution of \algname in a directed tree network with link capacity $c$. Suppose that the destination of all injected packets is the root, and that the injection pattern adheres to a $(c,\sigma)$ bound. For each node $v$, the number of packets in $v$'s buffer never exceeds $\sigma+2c$.
\end{theorem}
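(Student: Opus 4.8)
The plan is to reduce the theorem to a statement about the 2-load of each connected region of "hills," and then bound that 2-load by tracking how it evolves round by round. The key insight, already established in Lemmas~\ref{arrivesatone} and~\ref{forwardlose}, is that any packet forwarded in a round immediately drops to level 1, so the only way the 2-load of the system can grow is through injections; and in each round, forwarding causes the 2-load of each (post-merge) 2-plateau to drop by at least $c$. First I would make precise the decomposition: the load of any node $v$ is at most $c$ plus $c$ times (its height minus 1), which for a node of height $\le 1$ is at most $c \le \sigma + 2c$ trivially, so it suffices to bound the load of nodes of height $\ge 2$, i.e., nodes lying in some 2-plateau. For such a node, its contribution above level 1 is accounted for in $load_2$ of the 2-plateau containing it, so it suffices to show that every 2-plateau $P[s]$ (at any ministep $s$) satisfies $load_2(P[s],s) \le \sigma$; combined with the $\le c$ packets possibly at level 1 and the $\le c$ additional slack, this would actually give the bound $\sigma + 2c$ on total load (the extra $c$ covering the level-1 occupancy and rounding).

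The main argument is an induction over rounds on the quantity $load_2$ of each 2-plateau. Fix a round $i$ with forwarding ministeps $s_1,\ldots,s_1+c-1$ and a 2-plateau $P[s_1+c]$ at the start of the next round. By Lemma~\ref{forwardlose}, the forwarding that occurred in round $i$ guarantees
\[
load_2(P[s_1+c], s_1+c) \le load_2\!\Big(\bigcup_{P[s_1]\in Pre(P[s_1+c])} P[s_1], s_1\Big) - c.
\]
By the inductive hypothesis, each $P[s_1] \in Pre(P[s_1+c])$ had 2-load at most $\sigma$ at the start of round $i$; by Observation~\ref{disjointpre} these pre-image plateaus are pairwise disjoint, but that does \emph{not} let me simply sum, since $|Pre(\cdot)|$ may exceed~1 and $\sum \sigma$ is too large. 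The correct accounting is to note that the union $\bigcup Pre(P[s_1+c])$ was, at the start of round $i$, a union of disjoint 2-plateaus whose \emph{total} 2-load I must bound by $\sigma + c$ (so that after subtracting $c$ and adding the injections of round $i$ I return to $\le \sigma$). So the real invariant to carry is: at the start of every round, for every maximal connected set $H$ of nodes of height $\ge 2$, the number of packets in $H$ at level $\ge 2$ is at most $\sigma$. The $(c,\sigma)$ bound is what bounds the injections \emph{into} any such region over one round: every packet counted in the 2-load at level $\ge 2$ must traverse the outgoing edge of the exit node of its plateau, and the $(c,\sigma)$ constraint limits to $\rho \cdot 1 + \sigma = c + \sigma$ the number of packets injected in round~$i$ whose route uses that edge — but I only get to add those injected in the single round $i$, which is at most $c + \sigma - \sigma = c$ new packets beyond...

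Here is where the genuine obstacle lies, and I would be careful: the naive "injections add at most $c$ per round, forwarding removes at least $c$ per round" balance is not quite a clean telescoping because a 2-plateau can be created from scratch mid-execution (a node crosses height~1 for the first time) and because the exit edge of a plateau changes over time. The cleanest fix is to charge each packet at level $\ge 2$ to the edge leaving the current exit node and argue that, over the window since that packet last arrived at its node, the number of level-$\ge 2$ packets co-resident and sharing that exit edge is controlled by applying the $(c,\sigma)$ bound to a suitably chosen time interval — essentially, take the earliest round $t_0$ at which the relevant region became a "hill" with its current exit node, observe it had 2-load~$0$ just before, and then the net change over $[t_0, \text{now}]$ is (injections using the exit edge) minus (at least $c$ per round of forwarding) $\le (\rho(t-t_0) + \sigma) - c(t - t_0) = \sigma$ since $\rho = c$. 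I expect the delicate part to be making "the region became a hill at time $t_0$ with a well-defined, stable exit edge" rigorous in the presence of merges and of the exit node migrating; Lemma~\ref{forwardlose} and Observation~\ref{disjointpre} are exactly the tools for handling merges, and Lemma~\ref{arrivesatone} ensures no packet re-enters level $\ge 2$ after leaving it, so the per-round decrease of $c$ is never "wasted." Once this invariant $load_2 \le \sigma$ is in hand, the theorem follows: a node of height $\ge 2$ holds at most $c$ packets at level~1 plus at most $\sigma$ packets at level $\ge 2$ counted within its plateau — but since a single node could in principle be the whole plateau, its load is at most $\sigma + c$; adding one more $c$ of slack for the possibility that forwarding within the round has not yet occurred when we measure, we conclude the load never exceeds $\sigma + 2c = \sigma + 2\rho$.
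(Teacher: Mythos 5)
Your decomposition into level-1 occupancy plus 2-loads of plateaus is the right framing, and you correctly identified the two workhorse lemmas (packets drop to level~1 upon arrival; each 2-plateau that survives a round sheds at least $c$ packets relative to its merged pre-image). But the invariant you propose --- ``every 2-plateau $P$ satisfies $load_2(P)\le\sigma$ at the start of every round'' --- is a \emph{per-plateau} statement, and as you yourself observe, it does not telescope under merges: if $Pre(P[s_1+c])$ contains two pre-images each with 2-load close to $\sigma$, Lemma~\ref{forwardlose} only gives you back $c$, not the extra $\sigma$ needed to close the induction. Your ``charge to the exit edge over the window since the region became a hill'' fix is the right instinct for why the bound should be true, but it is not actually rigorous here: the exit node (hence the edge you would charge to) migrates as the plateau spreads toward the sink, the plateau at time $t$ typically contains nodes that were not in any hill at time $t_0$, and packets injected at those nodes before they joined the hill are not controlled by a $(c,\sigma)$ window on the edge you picked. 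You flag this as ``delicate,'' but it is precisely where the proof would break, and the proposal never resolves it.

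The paper sidesteps per-plateau bookkeeping entirely. Instead of an invariant on each plateau relative to its own birth time and exit edge, it maintains a \emph{global} invariant: at the start of every ministep $s$, the \emph{sum} of 2-loads over all 2-plateaus is at most $\mathit{totalinj}(\mathit{flat}(s),s-1) - c\cdot(\mathit{round}(s)-\mathit{round}(\mathit{flat}(s)))$, where $\mathit{flat}(s)$ is the last ministep at which \emph{every} node had height at most~$1$. Using a global reset time means there is one fixed window for all injections, so there is no need to track which node belongs to which plateau or which edge to charge; and summing the 2-loads over all plateaus means merges simply regroup terms in the sum while Lemma~\ref{forwardlose} still yields a decrease of $c$ in the total. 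The $(c,\sigma)$ bound is then applied once to the interval since $\mathit{flat}(s)$ to get $\mathit{totalinj}\le c\cdot w + \sigma + c$, and the buffer bound $\sigma+2c$ follows after adding the $\le c$ packets at level~1. Your final accounting is also off: if $load_2(P)\le\sigma$ held at every ministep, you would get $\sigma+c$, not $\sigma+2c$; the extra $c$ in the paper comes from the partial current round inside the $(c,\sigma)$ window, not from an unexplained ``slack.'' To repair your argument you would need to replace the per-plateau, per-exit-edge charging with a global reset point, which is essentially what $\mathit{flat}(s)$ provides.
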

\begin{proof}
	For any ministep $s$, let $\mathit{round}(s)$ be the round that 
	contains $s$. 
	Let $\mathit{flat}(s)$ be the last ministep, up to and including $s$, 
	such that 
	all nodes have height at most 1 at the start of the ministep. For any 
	ministeps $s_1 \leq s_2$, let $\mathit{totalinj}(s_1,s_2)$ be the total 
	number of packets injected into the system in all ministeps in the 
	range $s_1,\ldots,s_2$. The following invariant bounds the total number of packets at level 2 or higher.
	\begin{quotation}
		\hspace{-6mm}\textbf{Invariant I}: at the start of any ministep $s$, the sum of 
		2-loads of all 2-plateaus is bounded above by 
		$\mathit{totalinj}(\mathit{flat}(s),s-1) - c \cdot (\mathit{round}(s) - 
		\mathit{round}(\mathit{flat}(s)))$.
	\end{quotation}
	To see why Invariant I is sufficient to prove the theorem, note that, at 
	the start of any ministep $s$, the height of any node $v$ is bounded 
	above by the 2-load of the 2-plateau containing $v$, plus the number of 
	packets at level 1 in $v$'s buffer. The 2-load of the 2-plateau 
	containing $v$ is bounded above by the sum of the 2-loads of all 
	2-plateaus, and the number of packets at level 1 in $v$'s buffer is 
	bounded above by $c$. By the invariant, the sum of the 2-loads of all 
	2-plateaus is bounded above by $\mathit{totalinj}(\mathit{flat}(s),s-1) 
	- c \cdot 
	(\mathit{round}(s)-\mathit{round}(\mathit{flat}(s))$. By separately 
	considering 
	the packets injected in the first 
	$(\mathit{round}(s)-\mathit{round}(\mathit{flat}(s)))$ rounds and the 
	packets 
	injected so far in $\mathit{round}(s)$, we get 
	\[\mathit{totalinj}(\mathit{flat}(s),s-1) 
	\leq c \cdot (\mathit{round}(s)-\mathit{round}(\mathit{flat}(s))) + 
	\sigma + 
	c\] Putting together all of these facts, we get that the height of any 
	node $v$ is bounded above by \[c \cdot 
	(\mathit{round}(s)-\mathit{round}(\mathit{flat}(s))) + \sigma + c - c 
	\cdot 
	(\mathit{round}(s)-\mathit{round}(\mathit{flat}(s))) + c \leq \sigma + 
	2c\]
	
	We now proceed to prove the invariant. Clearly, the invariant holds at 
	the start of any ministep where every node has height at most 1. Assume 
	that the invariant holds for all ministeps up to and including ministep 
	$s$, and assume that, at the start of ministep $s+1$, at least one node 
	has a packet at level 2 at the start of the ministep. This last assumption implies that $\mathit{flat}(s) = \mathit{flat}(s+1)$. 
	
	There are three cases to consider.  If $s$ is a forwarding ministep, there are two cases. 
	\begin{enumerate}
		\item Suppose that $s$ is an 
		injection ministep. Then exactly one packet is injected into the system 
		during $s$, so the sum of the 2-loads of all 2-plateaus increases by at 
		most 1. On the other hand, the value of 
		$\mathit{totalinj}(\mathit{flat}(s),s-1) - c 
		\cdot (\mathit{round}(s) - \mathit{round}(\mathit{flat}(s)))$ increases 
		by 
		exactly 1 since only the first term can change during an injection 
		ministep. This shows that the invariant holds at the start of ministep $s+1$.
		
		\item Suppose that $s$ is not the final forwarding ministep of the round. Then the sum of the 2-loads of all 2-plateaus does not increase since no packet is injected and any forwarded packets are not received until the end of the round. On the other hand, since $\mathit{round}(s) = 
		\mathit{round}(s+1)$, the value of $\mathit{totalinj}(\mathit{flat}(s),s-1) - c 
		\cdot (\mathit{round}(s) - \mathit{round}(\mathit{flat}(s)))$ does not decrease. Therefore, the invariant still holds at the start of ministep $s+1$.
		
		\item Suppose that $s$ is the final forwarding ministep of the round. That is, if the forwarding ministeps of $\mathit{round}(s)$ are numbered $s_1,\ldots,s_1+c-1$, then $s = s_1+c-1$. Let $\mathscr{P}[s_1+c]$ be the set of 2-plateaus that exist at the start of ministep $s_1+c$. 
		
		For each $P[s_1+c] \in \mathscr{P}[s_1+c]$, we apply Lemma \ref{forwardlose} to conclude that 
		\begin{equation}\label{eachloses}
			load_{2}(P[s_1+c],s_1+c) \leq load_{2}( \bigcup_{P[s_1] \in Pre(P[s_1+c])} P[s_1],s_1) - c.
		\end{equation}
		Applying inequality \ref{eachloses} over all $P[s_1+c] \in \mathscr{P}[s_1+c]$, we get 
		\begin{equation}\label{combinelosses}
			load_{2}( \bigcup_{P[s_1+c] \in \mathscr{P}[s_1+c]} P[s_1+c],s_1+c) \leq load_{2}( \bigcup_{P[s_1+c] \in \mathscr{P}[s_1+c]}\bigcup_{P[s_1] \in Pre(P[s_1+c])} P[s_1],s_1) - c.
		\end{equation}
		Next, by Observation \ref{disjointpre}, we know that all of the pre-images of plateaus in $\mathscr{P}[s_1+c]$ are pairwise disjoint. In particular, this means we can simply bound the sum of the loads of these pre-images by the sum of the loads of all plateaus in $\mathscr{P}[s_1]$. Namely,
		\begin{equation}\label{unionbound}
			load_2(\bigcup_{P[s_1+c] \in \mathscr{P}[s_1+c]}\bigcup_{P[s_1] \in Pre(P[s_1+c])} P[s_1],s_1) \leq \sum_{P[s_1] \in \mathscr{P}[s_1]} load_2(P[s_1],s_1).
		\end{equation}
		Therefore, by combining inequalities \ref{combinelosses} and \ref{unionbound}, we have shown that 
		\begin{equation}\label{maininequality}
			load_2( \bigcup_{P[s_1+c] \in \mathscr{P}[s_1+c]} P[s_1+c],s_1+c) \leq \sum_{P[s_1] \in \mathscr{P}[s_1]} load_2(P[s_1],s_1) - c.
		\end{equation}
		Since the invariant holds at the start of ministep $s$,
		\begin{equation}\label{boundPS}
			\sum_{P[s_1] \in \mathscr{P}[s_1]} load_2(P[s_1],s_1) \leq 
			\mathit{totalinj}(\mathit{flat}(s_1),s_1-1) - c \cdot (\mathit{round}(s_1) - 
			\mathit{round}(\mathit{flat}(s_1))).
		\end{equation}
		
		Next, since we assumed that there is a node with height at least 2 at the start of ministep $s+1 = s_1+c$, it follows from Lemma \ref{arrivesatone} that there is a node with height at least 2 in each of the ministeps $s_1,\ldots,s_1+c-1$ as well. It follows that $\mathit{flat}(s_1) = \mathit{flat}(s_1+c)$. Further, since no packets are injected in ministeps $s_1,\ldots,s_1+c-1$ and $\mathit{round}(s_1) = 
		\mathit{round}(s_1+c) - 
		1$, it follows that
		\begin{equation*}
			\begin{split}
				& \mathit{totalinj}(\mathit{flat}(s_1),s_1-1) - c \cdot (\mathit{round}(s_1) 
				- 
				\mathit{round}(\mathit{flat}(s_1))) \\
				& = \mathit{totalinj}(\mathit{flat}(s_1+c),s_1+c-1) - c \cdot 
				(\mathit{round}(s_1+c) - 1 - 
				\mathit{round}(\mathit{flat}(s_1+c)))\\
				& = \mathit{totalinj}(\mathit{flat}(s_1+c),s_1+c-1) - c \cdot 
				(\mathit{round}(s_1+c) - 
				\mathit{round}(\mathit{flat}(s_1+c)))+c.
			\end{split}
		\end{equation*}

		Combining this fact with inequalities \ref{maininequality} and \ref{boundPS}, we conclude that
		\[
		load_2( \bigcup_{P[s_1+c] \in \mathscr{P}[s_1+c]} P[s_1+c],s_1+c) \leq 
		\mathit{totalinj}(\mathit{flat}(s_1+c),s_1+c-1) - c \cdot 
		(\mathit{round}(s_1+c) - 
		\mathit{round}(\mathit{flat}(s_1+c))).
		\]
		Hence, we have shown that the invariant holds at the start of ministep $s+1$.
		
	\end{enumerate}

\end{proof}

\subsection{Existential optimality}
In this section, we show that there is no algorithm that can prevent buffer overflows when the buffer size is strictly less than $\sigma+2c$. 
\begin{theorem}
Consider the network consisting of nodes $v_1,\ldots,v_n$ such that, for each $i \in \{1,\ldots,n-1\}$, there is a directed edge from $v_i$ to $v_{i+1}$. Consider any edge capacity $c \geq 1$ and $\sigma \geq 0$. For any algorithm, there is a $(c,\sigma)$-injection pattern such that the load of some buffer is at least $\sigma + 2c$ in some configuration of the algorithm's execution. 
\end{theorem}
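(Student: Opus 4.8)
The plan is to tailor a two-phase injection pattern to the given (deterministic) algorithm. Throughout I assume $n \geq c+2$; for shorter lines the quantity $\sigma+2c$ cannot in general be forced (e.g.\ for $n=2$ a smart algorithm keeps $v_1$ at $\sigma+c$), and the interesting regime is $n$ large. The guiding idea: an algorithm cannot simultaneously keep the whole line lightly loaded and keep some node from becoming over-full, since by a counting argument, after $n-2$ rounds of injecting $c$ packets per round at $v_1$ there must be a node carrying at least $c$ packets; injecting a burst of $\sigma+c$ more at that node pushes it to $\sigma+2c$.

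Concretely: in each of rounds $1,\dots,n-2$, inject $c$ packets at $v_1$, each routed along the full path $v_1\to v_2\to\cdots\to v_n$. Since a packet traverses at most one edge per round (once re-mapped it is in transit for the remainder of the round), a packet injected at $v_1$ in round $1$ cannot reach $v_n$ before the end of round $n-1$; hence nothing is delivered during rounds $1,\dots,n-2$, and at the start of round $n-1$ all $c(n-2)$ injected packets reside in the non-sink nodes $v_1,\dots,v_{n-1}$. Because $n\geq c+2$ we have $c(n-2) > (n-1)(c-1)$, so by pigeonhole some node $v_j$ holds at least $c$ packets at the start of round $n-1$ (the index $j$ depends on the algorithm's first-phase behavior but is determined before the second phase begins). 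In round $n-1$ I would then inject a single burst of $\sigma+c$ packets at $v_j$, each routed along $v_j\to v_{j+1}\to\cdots\to v_n$, and inject nothing afterwards. Since all injection ministeps of a round precede all of its forwarding ministeps, immediately after these injections $v_j$'s buffer holds at least $c+(\sigma+c)=\sigma+2c$ packets, the desired configuration.

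The remaining task — and the one requiring care — is to check that this pattern adheres to the $(c,\sigma)$ bound in every window of consecutive rounds, since the second-phase burst is deliberately made to ``collide'' with the first-phase stream through edge $e_j$ (and the edges beyond it). For $e_i=(v_i,v_{i+1})$ with $i<j$, only first-phase packets cross $e_i$, at most $c$ per round, so any window is fine; any window contained in rounds $1,\dots,n-2$ likewise carries at most $cw$ packets through any edge. The tight case is $i\geq j$ and a window $[a,n-1]$ of length $w=n-a$: it carries at most $c(n-1-a)$ first-phase packets crossing $e_i$ together with the $\sigma+c$ burst packets, a total of $c(n-a)+\sigma=cw+\sigma$, exactly the allowed amount; the one-round window $[n-1,n-1]$ carries only the $\sigma+c=c+\sigma$ burst packets. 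Thus the pattern is feasible. The crux is precisely this arithmetic: spending exactly $c$ per round for $n-2$ rounds leaves exactly one round's worth ($c$) plus the burstiness allowance ($\sigma$) free in round $n-1$, which is just enough to drop a burst of $\sigma+c$ on top of the $c$ packets already trapped at $v_j$.
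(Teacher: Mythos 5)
Your proof is correct, and it reaches the stated conclusion by a genuinely different mechanism than the paper's. The paper's adversary is adaptive throughout its build-up phase: in each round it injects $c$ packets at the node of maximum load and argues that this peak grows (roughly geometrically, via a halving claim about what forwarding does to the heaviest node and its out-neighbour) until some node holds at least $c$ packets. Your adversary instead pumps $c$ packets into $v_1$ for $n-2$ consecutive rounds, observes that none of the resulting $c(n-2)$ packets can have reached the sink since a packet crosses at most one edge per round, and applies a pigeonhole over the $n-1$ non-sink buffers to locate a node $v_j$ holding at least $c$ packets. Both proofs then close identically: inject a burst of $\sigma + c$ packets at $v_j$ in a single round, pushing its load to at least $\sigma + 2c$. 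What your route buys: the pigeonhole is entirely elementary and avoids any reasoning about how a given algorithm redistributes load from round to round, and your explicit window-by-window verification that the pattern adheres to the $(c,\sigma)$ bound (tight at $cw + \sigma$ across the phase boundary, exactly because the first phase spends exactly $c$ per round) fills in a feasibility check that the paper leaves implicit. The cost is the explicit restriction $n \geq c+2$, which the theorem statement does not impose; this is a real limitation of your pigeonhole, though it is worth noting that the paper's own argument also cannot succeed on a one-hop line, so the difference is more about stating the limitation than about proving a weaker claim.
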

\begin{proof}
 Consider the following injection pattern that injects packets destined for node $v_n$. In the first round, inject $c$ packets into $v_1$, and, in each subsequent round $i$, injects $c$ packets into the node with smallest index that has the maximum load at the start of round $i$. If the maximum load in round $i-1$ was $m_{i-1}$ at some node $v_j$, then either $v_j$ or $v_{j+1}$ has load at least $\lceil m_{i-1}/2 \rceil$ after the forwarding ministep of round $i-1$. Therefore, at the start of each round $i > 1$, the load at some node is at least $\lceil \sum_{k=1}^{i-1} c/2^{k} \rceil$. So, at the start of some round $i$, some node $v$ has load at least $c$, at which time $c+\sigma$ packet injections are performed at $v$.
\end{proof}

\section{Local Downhill Algorithms}\label{sectlocaldownhill}

In this section, we consider downhill algorithms where each node in the network must decide whether or not it will forward packets based only on local information. We show  that local downhill 
algorithms require significantly larger buffer size than the 
centralized algorithm presented in Section~\ref{ForwardToEmpty}. We 
also show that local downhill algorithms may require significantly 
larger buffer sizes than the \greedy algorithm where each node 
always forwards packets if it has any. However, we show that the 
opposite is also true by providing situations where \greedy 
requires significantly larger buffer sizes than local downhill 
algorithms.

\subsection{Local vs. Centralized Algorithms}\label{localVScentralized}
Recall that in a downhill algorithm, packets may be forwarded 
only to a lighter-load node. 
Intuitively, the difficulty faced by local downhill algorithms is that 
they cannot perform a coordinated move in which all nodes along a path 
simultaneously forward a packet each. 
In this case, only the node closest to the sink knows whether or not it 
will forward a packet (based on the load of its out-neighbour) so, to 
ensure that no packets are forwarded ``uphill", none of the other nodes 
in the interval can decide to forward a packet. We now set out to show the effects of this limitation on a local version of \FTE and on a more sophisticated downhill algorithm.

\paragraph{The bad scenario.}
We consider a network $v_1,\ldots,v_n$ of nodes (where $n>2$) such 
that, for each $i \in \{1,\ldots,n-1\}$, there is a directed edge with 
capacity 1 from $v_i$ to $v_{i+1}$. One packet with destination $v_n$ 
is injected at node $v_1$ in each round. Note that, under this 
$(1,0)$-injection pattern, \greedy only needs buffers of size 1 to 
ensure that no overflows occur.

\paragraph{\localFTE.} The \localFTE algorithm is a local version of  
\algname  
described in Section \ref{ForwardToEmpty}, defined as follows. 
Each node 
forwards a packet if and only if its buffer is non-empty and its 
out-neighbour has an empty buffer. Under the injection pattern specified above, $v_1$ receives a packet in every 
round but only forwards a packet every two rounds. It follows that, in 
the first $R$ rounds, the load of $v_1$ is $\lceil R/2 \rceil$. Note that $R$ 
can be chosen to be arbitrarily large, independently of $n$. 
This can be generalized to any injection pattern with $\rho>1/2$.

\begin{theorem}
	If $n > 2$, then, for any $R$ and constant $\rho>1/2$, there is a 
	$(\rho,1)$-injection pattern of $R$ packets such that \localFTE 
	requires buffers of size $\Omega(R)$ to prevent overflows.
\end{theorem}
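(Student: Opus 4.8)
The plan is to reuse the same line network $v_1,\ldots,v_n$ with unit-capacity edges and to inject all packets at $v_1$ with destination $v_n$, but now injecting at the fastest rate the $(\rho,1)$ bound permits instead of at rate $1$. Concretely, in round $i$ inject $\lfloor i\rho\rfloor-\lfloor (i-1)\rho\rfloor$ packets at $v_1$, stopping as soon as a total of $R$ packets has been injected; since $\sum_{i=1}^{T}\bigl(\lfloor i\rho\rfloor-\lfloor(i-1)\rho\rfloor\bigr)=\lfloor T\rho\rfloor$, this happens by round $T:=\lceil R/\rho\rceil$. The first thing to check is that this is a legal $(\rho,1)$-injection pattern: every injected packet traverses every edge of the line, so the only thing to verify for each edge is that the number of packets injected in any window $[t,t']$ is at most $\rho(t'-t)+1$, and for the chosen schedule that number is $\lfloor t'\rho\rfloor-\lfloor t\rho\rfloor\le t'\rho-(t\rho-1)=\rho(t'-t)+1$. (Truncating the last round to reach exactly $R$ packets only injects fewer, which is still legal.)

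The conceptual core is the following claim about \localFTE on this instance: \emph{$v_1$ never forwards a packet in two consecutive rounds.} Suppose $v_1$ forwards in round $i$. By the \localFTE rule, $v_2$'s buffer is empty at the start of round $i$'s forwarding ministep (which is unique since $c=1$); in particular the non-empty-buffer precondition fails at $v_2$, so $v_2$ does not forward in round $i$. Since the re-mapping of forwarded packets happens at the end of the round, at the end of round $i$ the packet forwarded by $v_1$ sits in $v_2$'s buffer, so $v_2$ has load at least $1$. No packet is ever injected at $v_2$ (all injections are at $v_1$), and $v_2$ is not the destination because $n>2$, so $v_2$ still has load at least $1$ at the start of round $i+1$'s forwarding ministep; hence $v_1$'s out-neighbour is non-empty and $v_1$ does not forward in round $i+1$.

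Given the claim, $v_1$ forwards in at most $\lceil T/2\rceil$ of the rounds $1,\ldots,T$, and since the edge capacity is $1$ it forwards at most $\lceil T/2\rceil$ packets in total during those rounds. As $v_1$ has no in-neighbour, the only way it loses packets is by forwarding, so the load of $v_1$ at the end of round $T$ is at least the number of packets injected at $v_1$ through round $T$ minus $\lceil T/2\rceil$, i.e.\ at least $R-\lceil T/2\rceil$. Using $T\le \lceil R/\rho\rceil\le R/\rho+1$ gives $\lceil T/2\rceil\le R/(2\rho)+1$, so this load is at least $R\bigl(1-\tfrac{1}{2\rho}\bigr)-2$. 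Since $\rho>1/2$ is a fixed constant, $1-\tfrac{1}{2\rho}$ is a positive constant, so the load of $v_1$ reaches $\Omega(R)$; hence \localFTE requires buffers of size $\Omega(R)$ on this $(\rho,1)$-injection pattern of $R$ packets.

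I do not expect a serious obstacle: the statement is just a quantitative version of the warm-up observation that precedes it. The two points that need care are the verification that the accelerated injection schedule obeys the $(\rho,1)$ bound on \emph{every} window (not merely the global one), and the precise formulation of the ``no two consecutive forwards'' claim, which leans on the model convention that forwarded packets are delivered only at the end of a round together with the \localFTE precondition that $v_1$ forwards only when $v_2$ is empty and the fact that $v_2$ is neither a source nor the destination. Everything else is routine bookkeeping of loads.
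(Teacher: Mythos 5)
Your proposal is correct and follows the same route the paper takes: observe that under \localFTE, $v_1$ cannot forward in two consecutive rounds (because the packet it forwards sits at $v_2$ throughout the next round's forwarding decision, $v_2$ being neither a source nor the destination when $n>2$), and then compare injection rate $\rho>1/2$ against the capped drain rate of one packet per two rounds. The paper states this only for the $(1,0)$ one-packet-per-round pattern and then asserts in one sentence that ``this can be generalized to any injection pattern with $\rho>1/2$''; what you have done is carry out that generalization explicitly — constructing the accelerated schedule $\lfloor i\rho\rfloor-\lfloor(i-1)\rho\rfloor$, checking the $(\rho,1)$ windowed bound, and doing the $R-\lceil T/2\rceil$ arithmetic — so the content is the same, just with the details the paper omits filled in.
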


\paragraph{\localDH.} In \localDH, each node forwards a packet if and 
only if 
its buffer is non-empty and the load of its out-neighbour's buffer is 
strictly less than the load of its own buffer. We set out to show that, 
after sufficiently many rounds, $v_1$'s load is $\Omega(n)$. 

For each $j \geq 1$, let \emph{state $S_j$} be the sequence of $n$ integers corresponding to the load of each node's buffer immediately after all injections of round $j$ occur.  In each state, we focus on what happens between $v_1$ and the first empty node. More formally, for any sequence $S$ of non-negative integers, the \emph{initial segment} of $S$, denoted by $init(S)$, is the maximal prefix of non-zero entries in $S$.
%In a slight abuse of notation, we will refer to an entry of $S_j$ by the corresponding node whose load is represented by the entry, e.g., ``the first node of $S_j$" is actually the first integer entry of $S_j$.
In what follows, we assume that $n$ is large enough such that $init(S_j)$ is always a proper subsequence of $v_1,\ldots,v_n$ (we later show that, for an execution of $R$ rounds, $n \geq \sqrt{R}+2$ suffices.) For each $k \geq 0$, we use $f(k)$ to denote the index of the first state such that $v_1$'s load is at least $k$. For each $k \geq 1$, we define $\Delta_k = f(k)-f(k-1)$. An intuitive definition of $\Delta_k$ is the number of forwarding ministeps that occur between $S_{f(k-1)}$ and $S_{f(k)}$. 

To prove the lower bound, we consider the states where the load of $v_1$ increases and show that the number of rounds between consecutive such states keeps growing by 2. We first illustrate this phenomenon by examining a concrete example.  In Figure \ref{localdownhill}, we have provided a prefix of the execution of \localDH. Let's consider the number of rounds between $S_7$ (the state where $v_1$'s load first becomes 3) and $S_{13}$ (the state where $v_1$'s load first becomes 4). Notice that, if we ignore $v_1$ in states $S_7,\ldots,S_{12}$, the remainder of their initial segments is equal to the initial segments of $S_2,\ldots,S_7$, respectively. Within this interval, notice that $S_3,\ldots,S_7$ is the part of the execution where $v_1$'s load first becomes 2 and then first becomes 3. So we can split up the set of states $S_7,\ldots,S_{13}$ as follows: 4 ``middle'' rounds corresponding to the states $S_3,\ldots,S_7$, plus the first round and last round. The above example illustrates the following general fact, which we will later formally prove.

\begin{lemma}\label{intervals}
	For every $k \geq 1$, $\Delta_{k+1} = \Delta_k + 2$.
\end{lemma}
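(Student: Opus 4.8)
The plan is to turn the self‑similarity visible in Figure~\ref{localdownhill} into a clean induction. First I would fix notation and two preliminaries. Write $L_j := init(S_j)$, let $\mathrm{tail}(w)$ be $w$ with its first entry removed, and $a :: w$ be $w$ with $a$ prepended; by convention $L_0 := ()$. The key structural observation is a \emph{locality} property: since \localDH only moves packets toward the sink, the first empty node of any state and everything past it never influence $init(S)$ in a later round. Hence there is a deterministic map $\Phi$ with $L_{j+1}=\Phi(L_j)$ for all $j$ (one round of \localDH, with the single injection at the head, applied to $L_j$ alone), together with a map $\Psi$ describing one round \emph{without} an injection at the head — exactly what the tail $v_2,v_3,\ldots$ experiences in a round where $v_1$ does not forward. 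I would also record that $v_1$'s load is nondecreasing and rises by $1$ in a round precisely when $v_1$ does not forward, and that in rounds $f(k-1)+1,\ldots,f(k)-1$ node $v_1$ forwards exactly one packet to $v_2$ while in round $f(k)$ it forwards none. (As the paper notes, taking $n\ge\sqrt R+2$ keeps all initial segments away from $v_n$, so these maps are well defined throughout the execution.)

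The heart of the proof is the following strengthened claim, proved by induction on $k$: \emph{(i)} $\Delta_{k+1}=\Delta_k+2$; and \emph{(ii)} $\mathrm{tail}(L_{f(k)+i}) = L_{f(k-1)-1+i}$ for every $i=0,1,\ldots,\Delta_{k+1}-1$. Part (ii) is the ``ignore $v_1$'' phenomenon made precise: during the phase in which $v_1$ is frozen at load $k$ (the rounds $f(k)+1,\ldots,f(k+1)$), the subnetwork $v_2,v_3,\ldots$ exactly replays the earlier evolution of the whole network, shifted back by $f(k)-f(k-1)+1=\Delta_k+1$ rounds. Given (ii), part (i) is immediate: under this shift $v_2$ — the node whose load blocks $v_1$ — plays the role of the earlier $v_1$, so $v_2$'s load stays $\le k-1$ until the shifted clock reaches $S_{f(k)}$, where it first equals $k$; that happens $\Delta_k+1$ rounds into the phase, and then one further round is needed for $v_1$'s own load to increment, so the phase lasts $\Delta_k+2$ rounds.

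To carry out the induction I would assume the claim for all indices up to $k$ and establish (ii) for $k+1$ by an inner induction on $i$, with (i) for $k+1$ then falling out as above. The inner step is routine — inside the new phase $v_1$ hands $v_2$ exactly one packet per round, so both sides of (ii) advance by one application of $\Phi$, using $\Phi(L_{f(k)-1+i})=L_{f(k)+i}$, which is just the definition of the execution. I expect the main obstacle to be the inner base case $i=0$, i.e.\ showing that the shift \emph{re‑initializes correctly at the phase boundary}: $\mathrm{tail}(L_{f(k+1)})=L_{f(k)-1}$. The plan there is to trace round $f(k+1)$ itself: since $v_1$ does not forward, $\mathrm{tail}(L_{f(k+1)})=\Psi\bigl(\mathrm{tail}(L_{f(k+1)-1})\bigr)$; (ii) for $k$ at $i=\Delta_{k+1}-1$ identifies $\mathrm{tail}(L_{f(k+1)-1})$ with $L_{f(k)}$, and the same identity one phase lower gives $\mathrm{tail}(L_{f(k)-1})=L_{f(k-1)}$, hence $L_{f(k)-1}=(k-1)::L_{f(k-1)}$ and, combining with (ii) for $k$ at $i=0$, also $L_{f(k)}=k::L_{f(k-1)-1}$. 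What remains is a one‑line lemma about the maps, $\Psi(a::w)=(a-1)::\Phi(w)$ whenever the first entry of $w$ is less than $a$ (in a no‑injection round the head then has strictly larger load than its successor, so it forwards once, delivering exactly one packet into the tail); applying it with $a=k$, $w=L_{f(k-1)-1}$ (whose first entry is $k-2<k$) and using $\Phi(L_{f(k-1)-1})=L_{f(k-1)}$ gives $\mathrm{tail}(L_{f(k+1)})=\Psi(L_{f(k)})=(k-1)::L_{f(k-1)}=L_{f(k)-1}$, closing the induction. The base cases ($k=1$ and, for safety, $k=2$) are finite computations matching Figure~\ref{localdownhill}, handled with a little care for the degenerate values $f(0)=1$, $\Delta_1=0$, $L_0=()$.
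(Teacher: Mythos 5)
Your proof is correct and follows essentially the same route as the paper: you exploit the same self-similarity ($\mathrm{tail}$ of the initial segment in one phase replays $init$ of earlier states, which is the paper's Lemma~\ref{structure}), driven by the same algebraic fact about a no-injection round acting on a head whose successor is lighter (your $\Psi(a::w)=(a-1)::\Phi(w)$ is the paper's Lemma~\ref{firstandprev} in disguise), closed by the same counting. The only difference is packaging — you fold the full run of intermediate tail/segment identifications and the $\Delta$-recurrence into one strengthened induction, whereas the paper proves the two endpoint identities as a standalone lemma and then counts; the content is the same.
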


\begin{figure}[!ht]
	\begin{center}
		\includegraphics[scale=0.56]{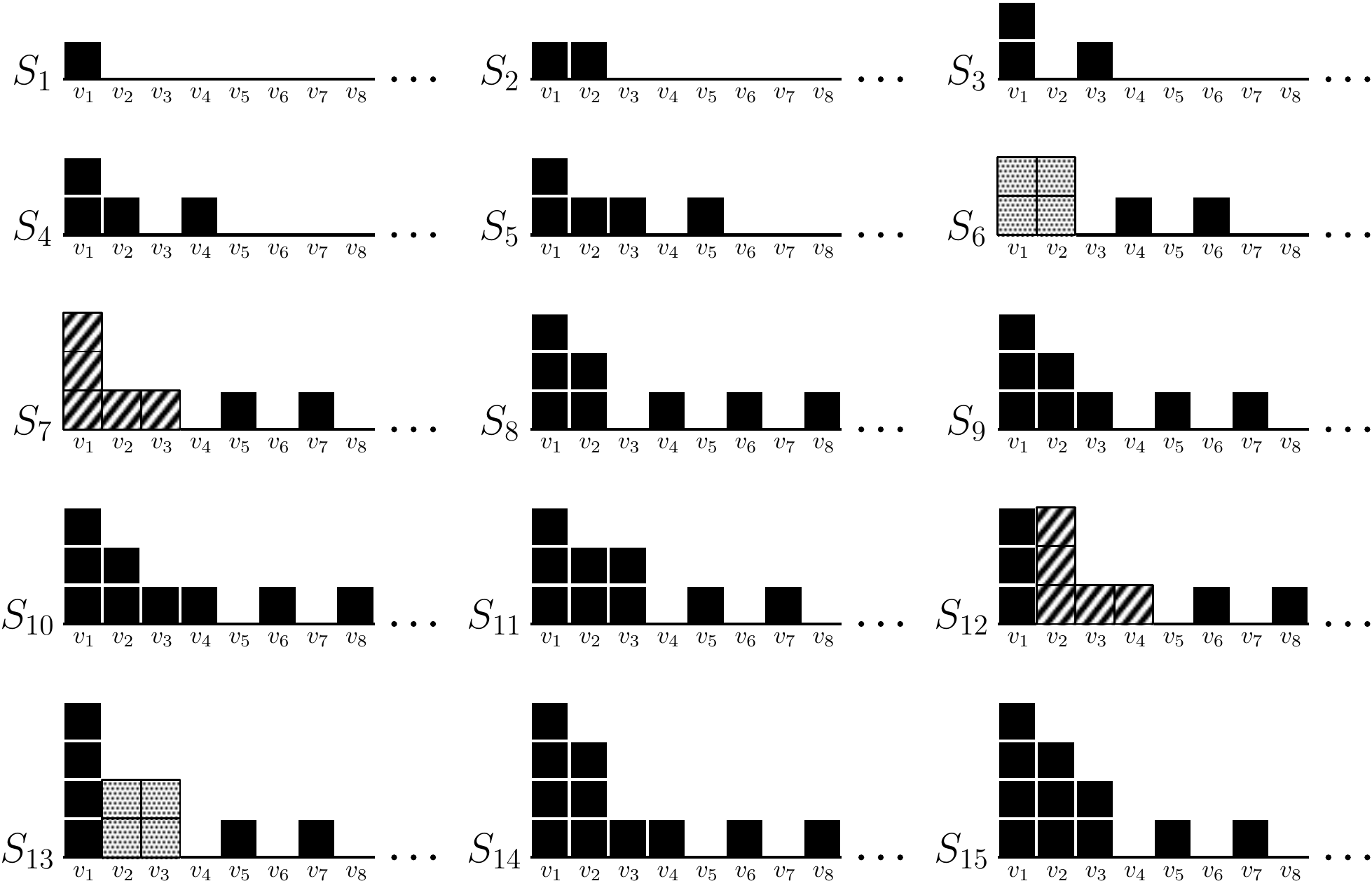}
	\end{center}
	\caption{The first 15 states of \localDH when a packet is injected at $v_1$ in each round. The dotted packets demonstrate that $tail(S_{f(4)}) = init(S_{f(3)-1})$. The diagonally-hatched packets demonstrate that $tail(S_{f(4)-1}) = init(S_{f(3)})$. }
	\label{localdownhill}
\end{figure}

The key to proving Lemma \ref{intervals} is observing, as we did in the above example, that the initial segments of some states will reappear later as the suffix of others. To describe this phenomenon, we divide the initial segment of each state $S$ into two parts: the first node of the segment is denoted by 
$\mathit{front}(S)$, and $\mathit{tail}(S)$ is defined to be
$\mathit{init}(S)$ %, i.e., the 	initial segment of $S$ 
without its first entry\footnote{LISP programmers might prefer to use \emph{car} and \emph{cdr} instead of \emph{front} and \emph{tail}!}.

We now provide the formal details of the lower bound. First, note that since a node only forwards a packet if the next node in the network has a strictly smaller load, the level of a packet never increases. Along with the fact that injections only happen at node $v_1$, we observe the following.

\begin{observation}\label{nonincreasing}
	For any state $S_j$, any subsequence of $init(S_j)$ is non-increasing.
\end{observation}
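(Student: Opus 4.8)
The plan is a proof by induction on the round index $j$. Because every link has capacity $1$, a round of \localDH is one injection (a single packet placed on $v_1$) followed by one forwarding ministep, in which each node $v_i$ with a non-empty buffer hands its top packet to $v_{i+1}$ exactly when $|v_{i+1}|<|v_i|$, all the re-mappings taking effect at the end of the round; thus $S_{j+1}$ is obtained from $S_j$ by applying one such forwarding ministep and then adding a packet to $v_1$. Writing $S_j=(a_1,\dots,a_n)$ and letting $\delta_i=1$ if $a_i>a_{i+1}$ and $\delta_i=0$ otherwise (with $\delta_0=0$), the load of $v_i$ after the ministep is $a_i-\delta_i+\delta_{i-1}$, with an extra $+1$ when $i=1$. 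The base case $S_1=(1,0,\dots,0)$ is immediate, and ``$init(S_j)$ is non-increasing'' is equivalent to the assertion in the observation.

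The natural statement ``$init(S_j)$ is non-increasing'' is not by itself inductive, because a single forwarding ministep can break monotonicity of the initial segment. Expanding $(a_i-\delta_i+\delta_{i-1})-(a_{i+1}-\delta_{i+1}+\delta_i)$, one sees that a violation at the $v_1$--$v_2$ boundary can occur but is immediately repaired by the injection at $v_1$ (a two-line case analysis on $\delta_1$), and that a violation at a node $v_i$ with $i\ge 2$ strictly inside the initial segment forces $S_j$ to contain four consecutive loads of the form $(x,x,x-1,x-1)$ within $init(S_j)$ --- a run of at least two equal loads immediately followed by a run of at least two equal loads one smaller --- which the $v_1$-injection does nothing to fix. (A violation that arises because the initial segment grows into previously-empty territory is controlled instead by how the packets just behind $init(S_j)$ are arranged.) So the monotonicity we want hinges on ruling out the pattern $(x,x,x-1,x-1)$.

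I would therefore carry through the induction a stronger invariant describing the whole state: $init(S_j)$ is non-increasing and free of the pattern $(x,x,x-1,x-1)$, and behind it $S_j$ is an alternating ``comb'' $0,1,0,1,\dots$ of in-transit packets (past $init(S_j)$, each non-empty buffer holds exactly one packet and no two non-empty buffers are adjacent). Granted this for $S_j$: the observations of the previous paragraph, together with the induction hypothesis, show $init(S_{j+1})$ is non-increasing, which is the claim; and it remains to show $S_{j+1}$ again has this shape --- i.e.\ that forwarding builds no $(x,x,x-1,x-1)$ block into the new initial segment and leaves the comb behind it intact. This verification is the main obstacle: it is done by a case analysis of the possible runs of $\delta_i$ along $S_j$, and the two facts stressed just before the observation --- injections occur only at $v_1$, and a forwarded packet is re-placed no higher than it was, so a short plateau cannot be pumped up from behind --- are precisely what exclude the bad configurations. (An essentially equivalent route, which the argument in the sequel effectively takes, is to first establish the ``shift'' phenomenon --- deleting $v_1$ from $init(S_j)$ reproduces verbatim the initial segment of an earlier state --- and then deduce monotonicity of $init(S_j)$ from the induction hypothesis applied to that earlier state.)
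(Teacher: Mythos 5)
The paper gives no proof for this observation; it is simply asserted after the informal remark that packet levels never increase and injections occur only at $v_1$. Your analysis shows that this remark is not, by itself, a proof: expanding the recurrence $a_i'=a_i-\delta_i+\delta_{i-1}$, you correctly isolate the one place a direct induction on ``$init(S_j)$ is non-increasing'' can fail, namely when $init(S_j)$ contains a configuration $(a_{i-1},a_i,a_{i+1},a_{i+2})=(x,x,x-1,x-1)$ with $i\ge 2$ (a run of at least two equal loads followed, with drop exactly $1$, by another run of at least two equal loads), and you correctly note that the $v_1$-injection repairs the potential inversion at the $v_1$--$v_2$ boundary but does nothing for a violation farther in. That diagnosis is accurate and exposes a real gap that the paper glosses over.

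But you do not close the gap. You propose a strengthened invariant (non-increasing, no $(x,x,x-1,x-1)$ block, a $0/1$ comb beyond the initial segment) and then declare the verification of its preservation to be ``the main obstacle,'' handled by ``a case analysis of the possible runs of $\delta_i$'' that you never carry out. That verification is the entire content of the proof: after one forwarding ministep, a run $(x,\dots,x)$ of length $\ge 2$ not touching $v_1$ becomes $(x+1,x,\dots,x,x-1)$, so new length-$\ge 2$ runs arise from the $+1$ heads and $-1$ tails of adjacent old runs merging, and showing that such merges never produce two adjacent length-$\ge 2$ runs with gap exactly $1$ requires a genuine argument about run lengths and gaps --- the two facts you cite (levels never increase; injections only at $v_1$) are far from sufficient on their own. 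Finally, the parenthetical alternative you float (first establish the ``shift'' phenomenon, then deduce monotonicity from it) is circular as stated: that shift is essentially Lemma~\ref{structure}, whose proof in the paper already invokes Observation~\ref{nonincreasing}. As it stands, your proposal accurately names the difficulty but does not resolve it, so the claim remains unproved.
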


Next, we observe that the load of a node $v$ cannot increase from one state to the next unless $v$'s out-neighbour has the same initial load.

\begin{observation}\label{increasek}
	For any non-negative integer $k$, let $S$ be any non-increasing sequence with $front(S) = k$, and let $S'$ be the sequence that corresponds to executing \localDH with one packet arriving at the first node of $S$. Then $front(S') = k+1$ if and only if the second entry of $S$ is $k$.
\end{observation}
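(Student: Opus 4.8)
The plan is to unwind the definition of $S'$ and track the load of the first node $v_1$ through the single forwarding step of \localDH and the subsequent injection. I would first fix notation: write $S = (a_1,\dots,a_n)$ so that $a_1 = front(S) = k$, let $a_2$ be the second entry, and recall that, matching the way the states $S_j$ are defined, $S'$ is obtained from $S$ by first applying the forwarding rule of \localDH and \emph{then} injecting one packet at $v_1$. The only packets $v_1$ can acquire are injected ones, since $v_1$ has no in-neighbour in the line; moreover the forwarding decisions are made on the loads at the start of the step, so $v_1$'s decision depends only on $a_1$ and $a_2$, and the rest of $S$ is irrelevant to $front(S')$.

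Next I would apply the \localDH rule to $v_1$: it forwards a packet if and only if its buffer is non-empty and $a_2 < a_1$, i.e.\ iff $k > a_2$ (and when $k = 0$ the buffer is empty, so no forwarding occurs, which is consistent with $k > a_2$ being false). Hence after the forwarding step $v_1$'s load is $k$ if $k \le a_2$ and $k-1$ if $k > a_2$; adding the one injected packet gives $front(S') = k+1$ when $k \le a_2$ and $front(S') = k$ when $k > a_2$. Finally I would invoke the hypothesis that $S$ is non-increasing, which gives $a_2 \le a_1 = k$; combining this with the previous line, $front(S') = k+1$ iff $k \le a_2$ iff $a_2 = k$, which is exactly the claim.

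I do not expect a genuine obstacle. The only points requiring care are the order of operations (forwarding precedes the injection, consistent with $S_j$ being taken ``immediately after all injections of round $j$''; note that under the reverse order $v_1$ would always be able to forward and $front(S')$ would always equal $k$, so the stated order is forced) and the degenerate case $k=0$ where the initial segment is empty — both handled by the parenthetical remarks above. A quick sanity check against Figure~\ref{localdownhill}, e.g.\ the step from $S_2$ to $S_3$, confirms the behaviour.
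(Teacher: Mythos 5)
Your proposal is correct and matches the paper's intent: the paper states Observation~\ref{increasek} without proof, treating it as immediate from the \localDH rule, and your argument is exactly that definition-unwinding, including the right reading of the state transition (forwarding based on the loads in $S$, then the single injection at $v_1$, consistent with how the states $S_j$ are defined and used in Lemmas~\ref{firstandprev} and~\ref{structure}). The handling of the $k=0$ case and the observation that $v_1$ has no in-neighbour, so only the injected packet can raise its load, close the only points the paper leaves implicit.
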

%\begin{proof}
%	When applying \localDH to $S$ with one packet arriving at the first node, the load of the first node increases if and only if the node does not forward a packet. But the first node forwards a packet if and only if its load is strictly greater than that of the second node. Since $S$ is non-increasing, it follows from Observation \ref{nonincreasing} that the second entry of $S$ is $k$.
%\end{proof}

We now provide a technical lemma that will be used to prove the main result. It will be used to describe what happens to the tail of some initial segment in the case where the front node of the state does not forward a packet.

\begin{lemma}\label{firstandprev}
	For any positive integer $k$, suppose that $tail(S_{f(k+1)-1}) = init(S_{f(k)})$ and $tail(S_{f(k+1)}) = init(S_{f(k)-1})$. If one round of \localDH is applied to $S_{f(k+1)}$ with no packets arriving at its first node, then the initial segment of the resulting state $S$ is $init(S_{f(k+1)-1})$.
\end{lemma}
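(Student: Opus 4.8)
The plan is to prove the lemma by a ``shift'' coupling: on the nodes $v_2,v_3,\ldots$, the round of \localDH that the lemma applies to $S_{f(k+1)}$ (with no arrival at $v_1$) produces exactly the same sequence of loads as the genuine round of the execution that carries $S_{f(k)-1}$ to $S_{f(k)}$, shifted by one node.

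First I would record some bookkeeping. Since exactly one packet enters $v_1$ each round and at most one leaves it, the load of $v_1$ increases by at most $1$ per round; hence $v_1$ has load exactly $k+1$ in $S_{f(k+1)}$, exactly $k$ in $S_{f(k+1)-1}$, and---when $k\ge 2$---exactly $k-1$ in $S_{f(k)-1}$ (the case $k=1$, where the predecessor of $S_{f(1)}$ is the all-zero state, is a short direct check). In particular $front(S_{f(k+1)-1})=k$, so the first hypothesis gives $init(S_{f(k+1)-1})=(k)\cdot init(S_{f(k)})$. It therefore suffices to show that the state $S$ produced from $S_{f(k+1)}$ satisfies $init(S)=(k)\cdot init(S_{f(k)})$.

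Now the coupling. Let $m=|tail(S_{f(k+1)})|=|init(S_{f(k)-1})|$, so $|init(S_{f(k+1)})|=m+1$ and the $(m+2)$-th entry of $S_{f(k+1)}$ is $0$. In $S_{f(k+1)}$ the load $k+1$ of $v_1$ strictly exceeds the load of $v_2$, which is $front(S_{f(k)-1})$ by the second hypothesis ($k-1$ when $k\ge 2$, and $0$ when $k=1$); hence during the round $v_1$ forwards one packet to $v_2$ and ends with load $k$. Since \localDH decisions use only start-of-round loads, the round restricted to $v_2,v_3,\ldots$ is: each $v_{i+1}$ forwards iff its successor is strictly lighter, $v_2$ receives exactly one packet (from $v_1$), and nothing else arrives in this chain. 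By the second hypothesis, for $1\le i\le m+1$ the load of $v_{i+1}$ in $S_{f(k+1)}$ equals the load of $v_i$ in $S_{f(k)-1}$ (both sequences are $init(S_{f(k)-1})$ followed by a $0$). Thus the decision made by $v_{i+1}$ here matches the one made by $v_i$ in the round $S_{f(k)-1}\to S_{f(k)}$, the packet $v_2$ gets from $v_1$ plays exactly the role of the injected packet at $v_1$ in that round, and the empty node $v_{m+2}$ (fed by the forwarding node $v_{m+1}$) plays the role of the empty node $v_{m+1}$. Consequently the load of $v_{i+1}$ in $S$ equals the load of $v_i$ in $S_{f(k)}$ for every $1\le i\le m+1$, and this common value is $1$ when $i=m+1$.

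It remains to upgrade ``the two sequences agree through position $m+1$'' to ``they have the same initial segment.'' If $|init(S_{f(k)})|\le m$ this is immediate. In the only other case, $|init(S_{f(k)})|=m+1$, one must also check that the $(m+3)$-th entry of $S$ and the $(m+2)$-th entry of $S_{f(k)}$ are both $0$; each of these is the image under one \localDH step of the entry lying two positions past the initial segment of $S_{f(k+1)}$, respectively of $S_{f(k)-1}$, and it vanishes exactly when that entry is $0$ or is a $1$ immediately followed by a $0$. The proof thus rests on one structural invariant of the execution---that outside its initial segment every state $S_j$ contains only isolated unit entries (equivalently, the entry two positions beyond $init(S_j)$ is at most $1$ and is followed by a $0$), which also guarantees $|init(S_{f(k)})|\le m+1$. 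Establishing this invariant (by induction on $j$, using Observation~\ref{nonincreasing}) is the main obstacle; granting it, $init(S)=(k)\cdot init(S_{f(k)})=init(S_{f(k+1)-1})$, as required.
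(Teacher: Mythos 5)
Your proof follows essentially the same ``shift''/coupling argument as the paper: $v_1$ forwards (because its load $k{+}1$ exceeds that of $v_2$), so $front(S)=k$, and the dynamics on $v_2,v_3,\ldots$ replay, one index to the right, the genuine round $S_{f(k)-1}\to S_{f(k)}$. Where you differ is in being explicit about a boundary point that the paper passes over silently: the coupling only matches loads through position $m{+}2$ of $S_{f(k+1)}$ (equivalently $m{+}1$ of $S_{f(k)-1}$), so one must rule out the possibility that the new initial segments extend further on one side but not the other, which depends on entries the hypotheses say nothing about. That concern is legitimate, and the structural invariant you propose---that beyond the initial segment every state $S_j$ has only isolated unit loads, each followed by a zero---is indeed true and does close the issue; but you leave it unproved, calling it ``the main obstacle,'' so as written your argument is not a complete proof. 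The invariant does follow by a fairly routine induction on $j$: a packet forwarded from the last node of $init(S_j)$ always lands on the immediately following empty node, and an isolated $1$ in the tail region always forwards onto an empty node while receiving nothing, so isolated $1$'s simply translate right and never pile up. If you supply that induction, your argument is both correct and somewhat tighter than the paper's own treatment, which simply asserts ``the tail of $S$ is exactly what we obtain by applying \localDH to $S_{f(k)-1}$'' without addressing this boundary effect.
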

\begin{proof}
	First, we note that when \localDH is applied to $S_{f(k+1)}$, a packet is forwarded from the first node to the second. To see why, we must convince ourselves that the second entry of $S_{f(k+1)}$ is strictly less than $k+1$. Recall that $S_{f(k+1)}$ is obtained by applying \localDH to $S_{f(k+1)-1}$ with a packet injected at the first node. But, in $S_{f(k+1)-1}$, all entries are strictly less than $k+1$, and no packet is directly injected to the second node, so there is no way for a packet to be at level $k+1$ at the second node of $S_{f(k+1)}$.
	
	Since the front node of $S_{f(k+1)}$ fowards a packet, and no packets arrive at the first node of $S_{f(k+1)}$ by assumption, it follows that the front of $S$ is equal to $k$. Further, since $tail(S_{f(k+1)}) = init(S_{f(k)-1})$ by assumption, the tail of $S$ is exactly what we obtain by applying \localDH to $S_{f(k)-1}$ with a packet arriving at the first node. By definition, this means that the tail of $S$ is equal to $init(S_{f(k)})$. By assumption, $tail(S_{f(k+1)-1}) = init(S_{f(k)})$, so $tail(S_{f(k+1)-1}) = tail(S)$. Together with the fact that $front(S) = k$, it follows that $S = init(S_{f(k+1)-1})$.
\end{proof}

We can now prove the main fact needed for the proof of Lemma \ref{intervals}. The patterned packets in Figure \ref{localdownhill} demonstrate Lemma \ref{structure} in the case where $k=3$. 
%It says that, in the first state where $v_1$'s load is $k+1$, the tail of initial segment is identical to an initial segment that appeared earlier in the execution, i.e., in the state immediately before the first state where $v_1$'s load is $k$. Further, if we look at the state immediately before the first state where $v_1$'s load is $k+1$, the tail of the initial segment is identical to the initial segment when $v_1$ first had load equal to $k$.

\begin{lemma}\label{structure}
	For any positive integer $k$, $tail(S_{f(k+1)}) = init(S_{f(k)-1})$ and $tail(S_{f(k+1)-1}) = init(S_{f(k)})$.
\end{lemma}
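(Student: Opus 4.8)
The plan is to prove Lemma~\ref{structure} by induction on $k$, using Lemma~\ref{firstandprev} as the engine that pushes the invariant from $k$ to $k+1$. The two claimed identities, $tail(S_{f(k+1)}) = init(S_{f(k)-1})$ and $tail(S_{f(k+1)-1}) = init(S_{f(k)})$, say that the suffix of an ``increase state'' (and of the state just before it) is a faithful copy of an earlier initial segment with the front node stripped off. Intuitively, once $v_1$ has accumulated a tall enough stack, it can only forward to $v_2$ every second round, so the subsystem $v_2,\ldots,v_n$ evolves exactly as the whole system did earlier but shifted one slot to the right and delayed; the states where $v_1$'s load jumps are precisely the states where this shifted copy is ``in sync'' with an earlier jump. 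The base case $k=1$ should be a short direct computation: starting from the all-zero configuration, one tracks the first few states until $v_1$ first reaches load $2$, and checks both identities by hand (this is exactly the $S_2,S_3$ portion of Figure~\ref{localdownhill}).

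For the inductive step, assume both identities hold for some $k$. I would first argue that in the window of rounds strictly between $S_{f(k)}$ and $S_{f(k+1)}$, the front node $v_1$ never increases its load (by definition of $f$) and, by Observation~\ref{increasek}, this is equivalent to the second entry being strictly below the front value throughout that window; hence in every such round $v_1$ forwards exactly one packet to $v_2$, while it receives exactly one injection, so $front$ stays constant and $tail$ advances by one ``localDH step with a packet arriving at its first node.'' This lets me identify the sequence of tails in the window $S_{f(k)},\ldots,S_{f(k+1)-1}$ with the sequence of initial segments $init(S_{f(k-1)}),\ldots,init(S_{f(k)-1})$ — i.e.\ the tail process is just the whole-system process shifted by $f(k)-f(k-1)$ rounds — and in particular gives $tail(S_{f(k+1)-1}) = init(S_{f(k)})$, the second identity for $k+1$. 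Then I apply Lemma~\ref{firstandprev} at level $k$ (its hypotheses are exactly the inductive hypothesis for $k$) to the state $S_{f(k+1)}$: it tells me that running one more round of \localDH starting from $S_{f(k+1)}$ with an injection at $v_1$ — but Lemma~\ref{firstandprev} is stated for \emph{no} arrival, so here I would instead invoke the version of the argument internal to its proof, tracking the front separately, to conclude $tail(S_{f(k+1)+1}) = init(S_{f(k)})$ and, more importantly, to locate the next jump: the first subsequent state whose front exceeds $k+1$ occurs exactly when the shifted tail process reaches its next jump, which by the inductive description is at offset $f(k+1)-f(k)$ forward, giving $tail(S_{f(k+2)}) = init(S_{f(k+1)-1})$, i.e.\ the first identity for $k+1$.

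The main obstacle I anticipate is bookkeeping the two competing ``clocks'': the behavior of $v_1$ (which toggles between forwarding and stalling depending on whether $v_2$ currently matches it) and the shifted-and-delayed behavior of the tail. One has to be careful that the window between consecutive jumps of $v_1$ contains exactly one full ``sub-execution'' of the tail process between its own consecutive jumps, with precisely one extra round at each end — this is the combinatorial heart of why $\Delta_{k+1} = \Delta_k + 2$ in Lemma~\ref{intervals}, and getting the off-by-one right at the boundaries (the state $S_{f(k+1)-1}$ versus $S_{f(k+1)}$, and which of these the injection in the next round is applied to) is where the proof could go wrong. I would handle this by always phrasing claims in terms of $front$ and $tail$ of each state, using Observation~\ref{nonincreasing} to guarantee the tail is a legitimate non-increasing sequence so that Observation~\ref{increasek} applies verbatim, and using Lemma~\ref{firstandprev} only in the one place where $v_1$ stalls, so that every transition is classified as either ``$v_1$ forwards'' or ``$v_1$ stalls'' with a clean description of the tail in each case.
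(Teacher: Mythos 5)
Your high-level plan — induction on $k$, view $tail$ as a shifted/delayed copy of the whole-system process, use Lemma~\ref{firstandprev} once at the one round where $v_1$ stalls — is exactly the paper's strategy, and your base case and your awareness of the off-by-one pitfalls are on the right track. However, the inductive step as written has two concrete problems, and the second one is the crux of the argument.

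First, your window is off by one level. You analyze the rounds $S_{f(k)},\ldots,S_{f(k+1)-1}$ and conclude $tail(S_{f(k+1)-1}) = init(S_{f(k)})$, but that is precisely the second half of the induction \emph{hypothesis} for $k$, not ``the second identity for $k+1$'' as you label it. The inductive step must establish $tail(S_{f(k+2)-1}) = init(S_{f(k+1)})$ and $tail(S_{f(k+2)}) = init(S_{f(k+1)-1})$, which requires working in the window $S_{f(k+1)},\ldots,S_{f(k+2)-1}$. There the starting tail is $tail(S_{f(k+1)})$, and the IH tells you this equals $init(S_{f(k)-1})$; in every round of this window $v_1$ forwards (since the second entry is $<k+1$), so the tail evolves as \localDH with an arrival, and since the \emph{whole-system} process starting from $S_{f(k)-1}$ first reaches front $k+1$ at $S_{f(k+1)}$, the first time the tail's front reaches $k+1$ — i.e., the first time the second entry of the whole state reaches $k+1$, which by Observation~\ref{increasek} is $S_{f(k+2)-1}$ — must have tail $init(S_{f(k+1)})$. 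That is the derivation you need, and it uses the IH about $tail(S_{f(k+1)})$, not $tail(S_{f(k)})$.

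Second, your application of Lemma~\ref{firstandprev} is misaimed. You try to apply it directly to $S_{f(k+1)}$, notice the lemma is stated for ``no arrival'' while an injection does occur at $v_1$, and then propose to ``invoke the version of the argument internal to its proof'' — which is not a proof, and suggests a misreading of what the lemma is for. The correct place is the transition $S_{f(k+2)-1}\to S_{f(k+2)}$: there the first two entries are both $k+1$, so $v_1$ does \emph{not} forward, and therefore the tail of $S_{f(k+2)}$ is obtained by applying \localDH to $tail(S_{f(k+2)-1}) = init(S_{f(k+1)})$ with \emph{no} packet arriving at its first node. That is exactly the hypothesis of Lemma~\ref{firstandprev}, whose remaining assumptions are the IH for $k$, and its conclusion gives $tail(S_{f(k+2)}) = init(S_{f(k+1)-1})$, the first identity for $k+1$. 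Without identifying this stall step as the one place the ``no arrival'' version of the dynamics kicks in, the argument does not close.
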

\begin{proof}
	The proof is by induction on $k$. For the base case, consider $k=1$. Note that the first state whose front equals 2 is $(2,0,1,0\ldots,0)$, and the tail of this state is the empty sequence (since the initial segment consists of a single entry equal to 2). Also, the state immediately preceding the first state whose front equals 1 is $(0,\ldots,0)$, and the initial segment of this state is the empty sequence. Therefore, $tail(S_{f(2)}) = init(S_{f(1)-1})$. Next, note that the state immediately preceding $(2,0,1,0\ldots,0)$ is $(1,1,0,\ldots,0)$ and that its tail is $(1)$. Also, the first state whose front equals 1 is $(1,0,\ldots,0)$ and its initial segment is $(1)$. Therefore, $tail(S_{f(2)-1}) = init(S_{f(1)})$.
	
	As induction hypothesis, suppose that, for some $k \geq 1$, we have $tail(S_{f(k+1)}) = init(S_{f(k)-1})$ and $tail(S_{f(k+1)-1}) = init(S_{f(k)})$. We first prove that $tail(S_{f(k+2)-1}) = init(S_{f(k+1)})$ and then that $tail(S_{f(k+2)}) = init(S_{f(k+1)-1})$. 
	
	In the states $S_{f(k+1)},\ldots,S_{f(k+2)-1}$, note that the front of each state is equal to $k+1$ due to the arrival of a packet at node $v_1$ in each round. By Observation \ref{increasek}, $S_{f(k+2)-1}$ is the first of these states whose second entry is equal to $k+1$. 
	
	First, to prove that $tail(S_{f(k+2)-1}) = init(S_{f(k+1)})$, consider the application of \localDH starting at state $S_{f(k+1)}$ and stopping when $S_{f(k+2)-1}$ is reached. In each of these rounds, a packet is forwarded from the first node to the second node since the second entry in these states is strictly less than $k+1$. So, in each of these states, the tail is a non-increasing sequence to which we are applying \localDH with a packet arriving at the first node. By the induction hypothesis, the first such tail, i.e. $tail(X_{k+1}) = tail(S_{f(k+1)})$, is equal to $init(S_{f(k)-1})$. Starting from $S_{f(k)-1}$ and applying \localDH with a packet arriving at the first node, the first sequence we reach with front equal to $k+1$, by definition, is $S_{f(k+1)}$. Since $S_{f(k+2)-1}$ is the first state we reach whose second entry is equal to $k+1$, it must be the case that $init(S_{f(k+1)}) = tail(S_{f(k+2)-1})$.
	
	Next, to prove that $tail(S_{f(k+2)}) = init(S_{f(k+1)-1})$, consider what happens when \localDH is applied to $S_{f(k+2)-1}$ with a packet injected at the first node. The resulting state is $S_{f(k+2)}$, and we wish to determine the structure of its tail. Since the first two entries of $S_{f(k+2)-1}$ are $k+1$, no packet is forwarded from the first node to the second node. Therefore, the resulting tail is equal to what is obtained by applying \localDH to $tail(S_{f(k+2)-1})$ with no packets arriving at its first node. However, we just showed that $tail(S_{f(k+2)-1}) = init(S_{f(k+1)})$, so it follows that $tail(S_{f(k+2)})$ is equal to what is obtained by applying \localDH to $init(S_{f(k+1)})$ with no packets arriving at its first node. Since, by the induction hypothesis, we have $tail(S_{f(k+1)}) = init(S_{f(k)-1})$ and $tail(S_{f(k+1)-1}) = init(S_{f(k)})$, Lemma \ref{firstandprev} implies that $tail(S_{f(k+2)}) = init(S_{f(k+1)-1})$.
\end{proof}

We now prove Lemma \ref{intervals}, which states that, for $k \geq 1$, $\Delta_{k+1} = \Delta_k + 2$. Recall that $\Delta_k = f(k)-f(k-1)$.

\begin{proof}
	(Lemma \ref{intervals}) First, for the case where $k=1$, notice that $f(0)=f(1)=1$ and $f(2) = 3$. Therefore, $\Delta_1 = 0$ and $\Delta_2 = 2$, as required.
	
	For any $k \geq 2$, consider the application of \localDH starting at state $S_{f(k)}$ and stopping when $S_{f(k+1)-1}$ is reached. In each of these rounds, a packet is forwarded from the first node to the second node since the second entry in these states is strictly less than $k$. So, in each of these states, the tail is a non-increasing sequence to which we are applying \localDH with a packet arriving at the first node. By Lemma \ref{structure}, the first such tail is $init(S_{f(k-1)-1})$,  so the second such tail is $init(S_{f(k-1)})$, and we stop when $tail(S_{f(k+1)-1})$ is reached. By Lemma \ref{structure}, this last tail is equal to $init(S_{f(k)})$. Subtracting the first index from the last, we get $f(k)-f(k-1)+1 = \Delta_k+1$. Finally, it takes one more round to reach $S_{f(k+1)}$ from $S_{f(k+1)-1}$, which gives the desired result.
\end{proof}

We now show that $v_1$'s load is eventually $\Omega(n)$. Using Lemma \ref{intervals}, we can show that the index of the first state where $v_1$'s load is equal to $k$ is equal to the sum of the first $k$ positive even integers, which implies the following result.

\begin{corollary}\label{ReachHeight}
	For any positive integer $k$, we have $f(k) = k^2 - k + 1$.
\end{corollary}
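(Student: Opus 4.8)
The plan is to combine Lemma~\ref{intervals} with the base values $\Delta_1 = 0$ and $\Delta_2 = 2$ already extracted in its proof, derive a closed form for $\Delta_k$, and then telescope. First I would pin down the initial values of $f$: the state $S_1$ is $(1,0,\ldots,0)$, so $v_1$ already has load $1$ in the very first state, giving $f(0) = f(1) = 1$ and hence $\Delta_1 = f(1) - f(0) = 0$. Feeding this into the recurrence $\Delta_{k+1} = \Delta_k + 2$ of Lemma~\ref{intervals}, a one-line induction yields $\Delta_k = 2(k-1)$ for every $k \geq 1$.

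Next, telescoping the definition $\Delta_j = f(j) - f(j-1)$ gives $f(k) = f(0) + \sum_{j=1}^{k} \Delta_j$, and substituting the closed form for $\Delta_j$ yields
\[
f(k) = 1 + \sum_{j=1}^{k} 2(j-1) = 1 + 2\binom{k}{2} = 1 + k(k-1) = k^2 - k + 1,
\]
which is exactly the claimed identity. Equivalently, one can phrase this as a direct induction on $k$: the base case $f(0) = 1 = 0^2 - 0 + 1$ holds, and the inductive step is $f(k) = f(k-1) + \Delta_k = \bigl((k-1)^2 - (k-1) + 1\bigr) + 2(k-1) = k^2 - k + 1$.

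There is no real obstacle here — the corollary is a purely arithmetic consequence of Lemma~\ref{intervals}. The only point requiring a moment's attention is fixing the correct base term of the recurrence, namely that $\Delta_1 = 0$ (not $2$), which I would justify exactly as in the $k=1$ case of the proof of Lemma~\ref{intervals}; once one initial value is nailed down, the recurrence $\Delta_{k+1} = \Delta_k + 2$ determines the whole sequence and the sum collapses immediately.
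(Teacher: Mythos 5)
Your proof is correct and follows essentially the same route the paper intends: extract $\Delta_1 = 0$ from the base case of Lemma~\ref{intervals}, solve the recurrence $\Delta_{k+1} = \Delta_k + 2$ to get $\Delta_k = 2(k-1)$, and telescope $f(k) = f(0) + \sum_{j=1}^{k}\Delta_j$. The paper gives no explicit proof of the corollary beyond the one-sentence remark about summing even integers (which is in fact slightly loosely phrased there --- the correct reading is $f(k) = 1 + \sum_{j=1}^{k-1} 2j$, exactly the sum you wrote); your write-up supplies the arithmetic the paper leaves implicit.
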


By choosing $k = n-2$ and executing \localDH on the specified injection pattern for $R = k^2 - k + 1$ rounds, Corollary \ref{ReachHeight} implies that the buffer at node $v_1$ will contain $n-2$ packets in the last round of the execution. A straightforward induction argument shows that the width of the initial segment is bounded above by one more than the load of $v_1$, so our assumption that the initial segment is always a proper subsequence of $v_1,\ldots,v_n$ holds. Therefore, we get the following lower bound on the buffer size required by \localDH.

\begin{theorem}
	There is a (1,0)-injection pattern such that \localDH requires buffers of size $\Omega(n)$ to prevent overflows.
\end{theorem}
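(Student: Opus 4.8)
The plan is to combine Corollary~\ref{ReachHeight} with a bound on how wide an initial segment can get, and then simply read off the buffer requirement. Concretely, given $n>2$, I would set $k=n-2$ and run the $(1,0)$-injection pattern (one packet destined for $v_n$ injected at $v_1$ in every round) on the line $v_1,\dots,v_n$ for $R=f(k)=k^2-k+1$ rounds. By Corollary~\ref{ReachHeight}, the load of $v_1$ in state $S_R=S_{f(k)}$ is at least $k=n-2$, so \localDH cannot avoid overflow at $v_1$ unless its buffers have size at least $n-2=\Omega(n)$; since $n-2$ grows with $n$, this is the claimed lower bound.

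The one point that genuinely needs attention is that the standing hypothesis under which Lemma~\ref{intervals}, Lemma~\ref{structure}, and Corollary~\ref{ReachHeight} were established --- that $init(S_j)$ is always a proper subsequence of $v_1,\dots,v_n$ --- actually holds for this choice of $n$ and $R$; otherwise the recursion describing the dynamics of \localDH would not be valid (a packet could run off the end of the line). For this I would prove, by a straightforward induction on $j$ using only the local forwarding rule and Observation~\ref{nonincreasing}, that the width of $init(S_j)$ is at most one more than the load of $v_1$ in $S_j$. Since $v_1$ receives exactly one packet and forwards at most one packet per round, its load is non-decreasing in $j$; by the definition of $f$ it is at most $k-1=n-3$ for $j<f(k)$, and (since it can rise by at most $1$ per round) exactly $k=n-2$ at $j=f(k)$. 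Hence for every $j\le R$ the width of $init(S_j)$ is at most $n-1<n$, so $init(S_j)$ is a proper subsequence, as required. To avoid any appearance of circularity this is run as a single induction on $j$: as long as $init(S_j)$ is a proper prefix, the dynamics behave as analyzed, which keeps the width bound valid, which in turn keeps $init(S_{j+1})$ a proper prefix.

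Putting the pieces together yields the theorem: with $n>2$, $k=n-2$, and $R=k^2-k+1$, the execution of \localDH on the one-packet-per-round $(1,0)$-pattern stays within the line $v_1,\dots,v_n$, and at the end of round $R$ the buffer of $v_1$ holds $n-2$ packets, so buffers of size $o(n)$ do not suffice. I expect the width-of-initial-segment induction needed to discharge the proper-subsequence assumption to be the only step requiring any work; everything else is a direct appeal to Corollary~\ref{ReachHeight}, which already carries the substantive content of the argument.
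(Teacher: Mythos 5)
Your proposal is correct and takes essentially the same route as the paper: choose $k=n-2$, run for $R=f(k)=k^2-k+1$ rounds, invoke Corollary~\ref{ReachHeight}, and discharge the standing proper-subsequence hypothesis via an induction showing the width of $\mathit{init}(S_j)$ is at most one more than $v_1$'s load. Your explicit handling of the potential circularity (interleaving the width bound with the validity of the recursion in a single induction on $j$) is a bit more careful than the paper's one-line remark, but it is the same argument.
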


\subsection{Downhill vs.\ Greedy}\label{downhillVSgreedy}

In this section, we show that the comparison between local downhill 
algorithms and the \greedy algorithm is not one-sided. For the same 
$(1,0)$-injection pattern used to prove the lower bounds for \localFTE 
and \localDH in Section \ref{localVScentralized}, \greedy only needs 
buffers of size 1 to prevent overflows. However, we now show that there 
is also a $(1,0)$-injection pattern for which buffers of size 1 suffice 
for \localFTE and \localDH but \greedy requires buffers of linear size.

\begin{theorem}
	There is a (1,0)-injection pattern such that \localFTE and \localDH require buffers of size 1 while \greedy requires buffers of size $\Omega(n)$ to prevent overflows.
\end{theorem}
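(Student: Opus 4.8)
The plan is to exhibit one injection pattern on the directed line $v_1\to\cdots\to v_n$ of Section~\ref{sectlocaldownhill} (capacity-$1$ edges, all packets destined for $v_n$) and to analyze that single pattern under all three algorithms. The leverage point is the following structural difference. As long as every buffer has load at most $1$, \localFTE and \localDH make exactly the same moves: a nonempty node forwards iff its out-neighbour is empty. In particular, a node $v_i$ that currently holds a packet ``blocks'' $v_{i-1}$, so a string of occupied nodes cannot compress; it can only inch forward one packet per round, turning into a spread-out ``dashed'' train. \greedy, by contrast, forwards unconditionally, so it will happily pile an entire waiting string on top of a single packet sitting further downstream. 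The pattern is engineered so that \greedy is repeatedly coerced into exactly this piling, while \localFTE and \localDH keep the very same packets strung out at one per node.

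Concretely, I would build the pattern in phases $2,3,\ldots,n-1$ and maintain the invariant that at the start of phase $j$, under \greedy there is a backlog of $h_j=\Theta(j)$ packets at $v_{j-1}$ with $v_j,\ldots,v_n$ empty, whereas under \localFTE and \localDH the identical sequence of injections so far has left every buffer with load $\le 1$. Phase $j$ injects, at a carefully chosen sub-unit rate and over $\Theta(h_j)$ rounds, a stream at $v_j$ (together with a small replenishing stream just behind it). Under \greedy, $v_{j-1}$ pushes its backlog into $v_j$ at rate $1$ while the new stream also feeds $v_j$ and $v_j$ clears only $1$ per round, so $v_j$'s occupancy grows; tuning the rate and the length of the phase makes $v_j$ finish the phase with $h_{j+1}=h_j+\Theta(1)$ packets and $v_1,\ldots,v_{j-1}$ empty, re-establishing the invariant for phase $j+1$. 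Under \localFTE and \localDH, the same injections land on a $v_j$ that is occupied often enough to keep $v_{j-1}$ blocked, so the upstream packets never coalesce, and the injection rate at $v_j$ is slow enough that $v_j$ always ships its current packet before the next one arrives, so no buffer ever exceeds load $1$. After phase $n-1$, \greedy has a buffer holding $h_{n-1}=\Theta(n)$ packets.

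For $(1,0)$-feasibility, observe that a packet whose route uses edge $(v_i,v_{i+1})$ must have been injected at one of $v_1,\ldots,v_i$; since the construction injects into only one node in any given round, and at rate at most one per round, no window of $w$ rounds ever contains more than $w$ packets using a fixed edge, which is precisely the $(1,0)$ bound for $\rho=1$. I would also remark that post-injection configurations count towards buffer occupancy (as in the existential-optimality proof), so the $\Omega(n)$ bound is witnessed by a genuine configuration. The delicate part — the main obstacle — is the calibration inside a phase and across phase boundaries: the stream at $v_j$ must be slow enough that \localFTE/\localDH never reach load $2$ there, fast enough that $v_j$ keeps $v_{j-1}$ blocked so that the upstream dashed train stays frozen, and timed so that \greedy's backlog grows by a fixed positive amount each phase; moreover the transition between phases, where a fresh stream meets the leftover dashed train under the local algorithms, must be checked never to create a load-$2$ buffer. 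Once the \greedy backlog invariant is in place, the \greedy side itself is a routine conveyor-belt computation.
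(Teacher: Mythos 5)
Your structural intuition is the right one, and it is exactly the paper's: \greedy compresses a string of occupied nodes into a single buffer, whereas under a local downhill rule a spread-out (``dashed'') train can only inch forward and never coalesce; your observation that \localFTE and \localDH make identical moves whenever every buffer has load at most~$1$ is also correct and is what lets the paper treat both local algorithms with a single analysis. However, what you have is a plan, not a proof: you leave the entire phased construction — the injection rates, the ``small replenishing stream just behind'' the main one, the behavior at phase boundaries, and the verification that no buffer ever reaches load~$2$ under the local rules — as an explicitly acknowledged open obstacle. There is also a feasibility worry with running two concurrent streams: if the ``replenishing'' stream and the stream at $v_j$ inject in the same round, both packets traverse every edge downstream of $v_j$, and a single-round window then sees $2 > \rho\cdot 1 + \sigma = 1$ packets on that edge, violating the $(1,0)$ bound; so the two streams would have to be interleaved across rounds, which in turn changes the rate calculus you would need to carry out.

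The paper's construction is far simpler and avoids all of this. Take $n$ even and use a single stream, one packet per round. Phase one (rounds $1,\dots,n/2$): inject the round-$i$ packet at $v_{2i-1}$. Under \greedy every packet advances one hop per round, so at the end of phase one the nodes $v_{n/2+1},\dots,v_{n-1}$ form a contiguous block with load~$1$ each; under \localFTE (equivalently \localDH at these loads), each new injection at $v_{2i-1}$ sits just in front of the previous packet's current position, so packets block one another and no buffer ever exceeds load~$1$. Phase two (rounds $n/2+1,\dots,n$): inject one packet per round at $v_{n-1}$. Under \greedy the block slides into $v_{n-1}$, which receives two packets per round (one forwarded, one injected) but can emit only one, so its load grows to $\Theta(n)$. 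Under \localFTE/\localDH the injected packet at $v_{n-1}$ keeps $v_{n-2}$ (and, by propagation, the whole upstream train) blocked, so no load ever reaches~$2$. A single $(1,0)$-feasible pattern thus witnesses both bounds, with no rate calibration at all.
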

\begin{proof}
	Assume that $n$ is even. In the first phase of the injection pattern, we inject a single packet at node $v_{2i-1}$ in each of the rounds $i = 1,\ldots,n/2$. In the second phase of the injection pattern, a single packet is injected at node $v_{n-1}$ in rounds $n/2+1,\ldots,n$.
	
	In the execution of \greedy, at the end of the first phase, nodes $v_{n/2},\ldots,v_{n-1}$ each have exactly 1 packet in their buffer. In each round of the second phase, two packets arrive at node $v_{n-1}$: one packet forwarded from $v_{n-2}$ and one injected packet. However, only one packet is forwarded by $v_{n-1}$ in each round, which means that, at the end of round $n$, the buffer at $v_{n-1}$ contains $n/2$ packets.
	
	In the execution of \localFTE, at the end of the first phase, there is a packet at node $v_{n-1}$ and all other packets injected so far are at nodes with smaller indices (and no node ever has more than 1 packet in its buffer). In each round of the second phase, a single packet is injected at node $v_{n-1}$ and one packet is forwarded from $v_{n-1}$ to $v_n$ (which is immediately absorbed since $v_n$ is the sink). No packets are forwarded from $v_{n-2}$ to $v_{n-1}$ in this phase since the buffer at $v_{n-1}$ always contains a packet. It follows that at most one packet is in each node's buffer during the entire execution. The analysis of \localDH is identical.
\end{proof}

\section{Conclusion}
\label{sec-conc}

In this work, we have shown that an extremely simple
algorithm (requiring central coordination) suffices to bound the
required buffer size by a function of the burstiness of the injected
traffic. In other words, it is possible to design an algorithm that does not increase the inherent
burstiness of the traffic. This required the algorithm to be
non-greedy and non-local, and this is not coincidental. 

In future work, we would
like to extend the results to topologies more general than trees,
and to multiple destinations.
It would also be interesting to determine whether or not there is a more general local algorithm, e.g., that sometimes sends packets `uphill', whose buffer size requirements depend only on the values of $\rho$ and $\sigma$. The semi-local variant is also intriguing:
suppose that nodes can coordinate within a certain range. How does this affect packet accumulation?

\section*{Acknowledgment}
We wish to thank the authors of \cite{dobrev} for pointing out a bug in a preliminary version of this work and for suggesting an approach for addressing it.

\bibliographystyle{splncs03}
\bibliography{downhill-disc}

\end{document}